\def\E{\mathbb{E}}
\newtheorem{theorem}{Theorem}
\newtheorem{Remark}{Remark}
\newcommand{\bbeta}{\boldsymbol{\eta}}
\definecolor{azure}{rgb}{0.0, 0.5, 1.0}
\definecolor{violet}{rgb}{0.58, 0.0, 0.83}
\begin{document}

\title{Coexistence of D2D Communications and Cell-Free Massive MIMO Systems with Low Resolution ADC for Improved Throughput in Beyond-5G Networks}

\author{Hamed~Masoumi,
        Mohammad~Javad~Emadi, and 
        Stefano~Buzzi, {\emph{Senior Member, IEEE}}
\thanks{

H. Masoumi and M. J. Emadi are with the Department of Electrical Engineering, Amirkabir University of Technology (Tehran Polytechnic), Tehran, Iran (E-mails:\{hamed\_masoomy, mj.emadi\}@aut.ac.ir). S. Buzzi is with the Department of Electrical and Information Engineering, University of Cassino and Southern Latium, Cassino, Italy, and with Consorzio Nazionale Interuniversitario per le Telecomunicazioni
(CNIT), Parma, Italy (E-mail: buzzi@unicas.it).}}

\markboth{IEEE Transactions on Communications ,~Vol. 70, No. 2, FEBRUARY 2022}%
{Shell \MakeLowercase{\textit{et al.}}: Bare Demo of IEEEtran.cls for IEEE Journals}

\maketitle

\vspace{-1cm}
\begin{abstract}
In this paper, uplink transmission of a cell-free massive multiple-input multiple-output (CF-mMIMO) system coexisting with device-to-device (D2D) communication links is investigated, under the assumption  that access points (APs) are equipped with low resolution analog-to-digital converters (ADCs). Lower bounds of achievable rates for both D2D users (DUEs) and CF-mMIMO users (CFUEs) are derived in closed-form, with perfect and imperfect channel state information. Next, in order to reduce pilot contamination,  greedy and graph coloring-based pilot allocation algorithms are proposed and analyzed for the considered scenario. Furthermore,  to control interference and improve the performance, two power control strategies are designed and their complexity and convergence are also discussed. The first power control strategy aims at maximizing CFUEs' sum spectral efficiency (SE) subject to quality of service constraints on DUEs, while the second one maximizes the weighted product of CFUEs' and DUEs' signal-to-interference-plus-noise-ratios (SINRs).  Numerical results show that the proposed pilot and power allocations bring a considerable improvement to the network SE. Also, it is revealed that the activation of D2D links has a positive effect on the system throughput, i.e. the network offloading ensured by the D2D links overcomes the increased interference brought by D2D communications. 

\end{abstract}
\IEEEpeerreviewmaketitle

\textit{Index Terms}\textemdash Cell-free  massive MIMO, device to device communications, low resolution ADC, spectral efficiency, uplink data transmission.
\section{Introduction}
\lettrine{T}{he} use of large-scale antenna arrays at base stations, a solution commonly known as massive MIMO (mMIMO), has been one of the main technological innovations of fifth-generation systems. Indeed, in rich scattering environments, having a large number of antennas at a base station permits multiplexing, with simple beamforming schemes, several users on the same time-frequency slot, thus leading to remarkable improvements in the network throughput. Unfortunately, mMIMO is not capable of solving the problem of user performance disparity, since there is usually a large gap between the achievable rate for users that are located in the inner part of a radio cell, and that of users located at the cell borders, where large interference levels may be present. In order to overcome this problem, a new deployment architecture, named CF-mMIMO, proposed in recent years \cite{ngo2017cell,nayebi2017precoding}, is seriously considered as one of the main building blocks of future  beyond fifth generation and sixth generation (6G) wireless networks  \cite{rajatheva2020white,zhang2020prospective}. In CF-mMIMO, a large number of distributed antennas or APs are deployed in the coverage area to create macro diversity and to provide increased performance uniformly across users. The APs are connected to a central processing unit (CPU) through fronthaul links, and the use of the time-division-duplex (TDD) protocol permits avoiding channel estimation on the downlink. Moreover, in CF-mMIMO uplink channel estimates are retained at the APs and are used to compute locally the beamformers, thus avoiding an excessive load on the fronthaul links. 

Besides CF-mMIMO, D2D communications, originally introduced in 3GPP LTE  Release 12, have gained more and more importance over the years, and now there is a general consensus that they will be present in 6G networks as well \cite{liu2020graph,zhang2020envisioning}. 
 Indeed, in future densely populated network the chance of having users in a close proximity that want to communicate will not be negligible. So, by allowing these devices to communicate directly the performance of the communication is improved due to shorter distance between these devices compared with their distance from APs. This also contributes to reducing the network load and to improve the data rate and the delay w.r.t. the case in which communication flows through the APs \cite{zhang2020prospective,rajatheva2020white,zhang2020envisioning}.

Finally, in a CF-mMIMO system with dense AP deployment the distance between APs and UEs is not so large, and so D2D communications may happen at close distance from APs. The interference caused by D2D communications is thus larger than the traditional cellular deployment with macro-BS, and it is thus important to consider  co-existence issues between D2D links and CF-mMIMO links.
Otherwise stated, the simultaneous operation for CF-mMIMO and D2D links on the same carrier frequency  causes mutual interference and performance degradation, whereby proper resource management algorithms are to be employed\footnote{Although CF-mMIMO makes the user-AP distance smaller, in practice it is not always the case because we cannot mount the AP anywhere that we want. So, by deploying D2D communications not only the users will be able to enjoy low path loss in poorly covered regions but also the load on the cell-free infrastructure and its fronthaul will be alleviated by offloading the traffic of the users with the possibility of  establishing D2D connections in areas with densely active users.}. This consideration motivates the study that is here presented. \textcolor{black}{In the following three subsections related works on CF-mMIMO and D2D are reviewed and our motivation and contributions are presented.}

\subsection{CF-mMIMO Related Works}
Primary groundbreaking works on CF-mMIMO started with the seminal papers \cite{ngo2017cell} and  \cite{nayebi2017precoding}, which revealed its potential with respect to classical network deployments. The follow-up studies considered various aspects of CF-mMIMO, including its performance with different approaches \cite{buzzi2019user,bjornson2019making,zhang2019cellR,mai2018pilot,liu2019tabu,liu2020graph,jin2020spectral,interdonato2020local,liu2019spectral}, under different non-ideality circumstances \cite{hu2019cell,masoumi2019performance,zhang2018performanceR,zhang2020rfR,zheng2020efficientR}, and  its functioning in combination with other technologies \cite{alonzo2019energy,rezaei2020underlaid,d2020analysis,wang2019performance,bashar2019performance}. To be specific, \cite{buzzi2019user} studies a user-centric approach along with resource allocation strategies for uplink and downlink data rates and it shows tangible performance improvements compared to the cell-free scenario. In \cite{bjornson2019making}, a comprehensive investigation is conducted on the performance of different levels of cooperation among APs, and it turns out that with  global or local minimum mean square error (MMSE), CF-mMIMO outperforms the classical cellular counterpart significantly. 
In \cite{mai2018pilot}, authors present a pilot power allocation problem aimed at optimizing the channel estimation normalized total mean square error, with random pilot assignment and largest large-scale fading-based AP selection 
scheme. Also, \cite{liu2019tabu} and \cite{liu2020graph} apply tabu-search and graph coloring pilot allocation for CF-mMIMO, respectively. \cite{jin2020spectral} exploits beamformed downlink pilots in a correlated Rician fading CF-mMIMO system and proposes power optimization for the downlink of this system. \cite{interdonato2020local} proposes locally implementable zero-forcing (ZF) precoders and derives SE for the downlink while \cite{liu2019spectral} utilizes ZF combining for the uplink of CF-mMIMO and obtains SE expressions in the closed-form for perfect and imperfect channel state information (CSI).
\cite{hu2019cell} investigates the effect of low resolution ADCs in both APs and user equipment (UEs) for the downlink of CF-mMIMO and presents a max-min power control. In \cite{masoumi2019performance}, the uplink of CF-mMIMO with limited fronthaul capacity and hardware impairments at both APs and UEs are considered and the sum rate maximization problem is investigated. Finally, in \cite{zhang2018performanceR,zhang2020rfR,zheng2020efficientR} the effects of different hardware impairments in CF-mMIMO such as low-resolution ADCs, radio frequency impairments are studied.

In parallel, many researchers studied CF-mMIMO in coexistence with other technologies. 
In a spectrum sharing scenario, the downlink performance of CF-mMIMO system as a secondary network that is underlaid below a co-located mMIMO system with non-orthogonal multiple access (NOMA) technique is scrutinized in \cite{rezaei2020underlaid}. \cite{d2020analysis} inspects the support for unmanned aerial vehicles as well as ground users in CF-mMIMO networks for the uplink and downlink transmissions along with max-min power allocation. Moreover, SE of CF-mMIMO with full-duplex APs is analyzed and deterministic equivalents for uplink and downlink sum rates are also presented in \cite{wang2019performance}. Furthermore, authors of \cite{bashar2019performance} examine an adaptive mode switching between NOMA and orthogonal multiple access for the downlink of CF-mMIMO with max-min power control. 

\subsection{D2D Related Works}
D2D communications have attracted a large share of interest in the recent past \cite{srinivasan2019joint,xu2016pilot,yang2016downlink,pan2017resourceR,he2017spectral,lin2015interplay,liu2018performance}. \cite{srinivasan2019joint,xu2016pilot} study the uplink of a single-cell mMIMO network with underlaid D2D users. \cite{srinivasan2019joint} maximizes the uplink sum data rates of cellular users with perfect CSI by jointly optimizing power and the resources subject to energy 
consumption constraint at the base station equipped with a low-resolution ADC, outage probability constraint for D2D users, and maximum transmit power. However, \cite{xu2016pilot} estimates the channel of D2D pairs using pilots that are orthogonal with the pilots of cellular users and are reused among D2D pairs and applies graph coloring strategy for pilot assignment, and proposes an optimization problem for minimizing sum power consumption of D2D transmitters subject to quality-of-service (QoS) for cellular users. 
For a similar setting, the sum SE of D2D users is maximized in \cite{yang2016downlink}, with cellular users assumed to operate in downlink mode. In \cite{pan2017resourceR} for a power domain single-cell NOMA-based system with underlaid D2D users and full CSI knowledge, power allocation as well as channel assignments are applied to maximize the sum of instantaneous data rate for D2D pairs. \cite{he2017spectral} addresses open-loop power control for the uplink of multi-cell mMIMO systems with underlaid D2D pairs and without considering channel estimation or pilot transmission. 
In \cite{lin2015interplay}, uplink multi-cell mMIMO system with underlaid D2D pairs is investigated; in particular, asymptotic and non-asymptotic SE of cellular and D2D users with perfect and imperfect CSI using orthogonal pilots and without power allocation is analysed. For D2D-based vehicle-to-vehicle (V2V) communications underlaid in the uplink of a sigle-cell mMIMO system, the SE of V2V users and cellular users with perfect CSI and using ZF and maximum-ratio combining (MRC) are derived in \cite{liu2018performance}. Next, a power optimization problem to maximize the sum SE of V2V users subject to QoS for cellular users is proposed. 

\subsection{Contribution}
In the considered dense CF-mMIMO system, D2D and cell-free users operate in the same resources and there is increased likelihood that they happen to be in close proximity of each other and some APs. This can create a strong mutual interference between D2D and cell-free users not only in data transmission phase but also during channel training due to pilot contamination. In the related prior works, the behaviour of such system is not investigated, and it is not clear how this mutual coupling will affect the overall system performance. To this end, for the considered system we have analysed the performance of both cell-free and D2D users by deriving their SE in the closed form. These expressions reveal the mutual effect of cell-free and D2D users on each other's performance and therefore to control interference and improve the system performance we have proposed pilot assignment algorithms and power control optimization problems. Specifically, we present and solve two optimization problems: the first one maximizes the SE of CFUEs with QoS constraints on DUEs' SE, while the second one maximizes the weighted product of SINRs of DUEs and CFUEs. We also assume that there is a limited number of orthogonal pilots which are reused among DUEs and CFUEs for channel estimation, and thus, two pilot assignment algorithms are considered to manage pilot contamination. Also, low resolution ADCs are used to make cell-free system energy- and cost-efficient. While using low resolution ADCs negatively impacts the SE and channel estimation quality, our results indicate that utilizing moderate resolution ADCs (around 4 bits) can reduce their degrading effect to a large extent. User-centric approach also considered to make the system more realistic and scalable. The contribution of this paper can be summarized as follows.
\begin{itemize}
    \item For the uplink of user-centric CF-mMIMO with \textit{underlaid D2D users} and \textit{low resolution ADCs} at the APs, \textit{closed-form SE} formulas for both CFUEs and DUEs with perfect and imperfect CSI are derived.
    \item Since a limited number of orthogonal pilots are \textit{reused} among all the users, two pilot allocation algorithms, i.e. a \textit{greedy-based algorithm} for CFUEs and a \textit{graph coloring-based algorithm} for DUEs, are adopted in order to limit the pilot contamination effects.
\item \textit{Two power allocation strategies} are proposed to further improve the system performance. In the first one, sum SE of CFUEs are maximized subject to QoS for DUEs and maximum transmit power. In the second one, the weighted product of SINRs of CFUEs and DUEs is maximized subject to maximum transmit power of the users. For both problems, solutions based on geometric programming and on successive convex lower bound  maximization are proposed and their complexity and convergence are analyzed.
\item Finally, numerical results are provided to evaluate the performance of the system and the proposed resource allocation problems in the considered scenario.
\end{itemize}

\emph{Organization}: In the remainder of the article we present the system model in Section \ref{sec2sysmodel}. The performance analysis is carried out in Section \ref{sec3performance}, and pilot assignment and power control are addressed in Section \ref{sec4PilotandPower} and \ref{Sec:powerallocation}, respectively. Finally, numerical results are presented in Section \ref{sec5numerical}, while concluding remarks are given in Section \ref{sec6conclusion}.

\emph{Notation}: For matrices and vectors we use boldface uppercase and boldface lowercase letters, respectively. $\boldsymbol{x}\in \mathbb{C}^{N\times 1}$ denotes a vector in a $N$-dimensional complex space, $\delta_{ij}$ equals $1$ for $i=j$ and $0$ otherwise. Moreover, $(.)^{*}$, $(.)^{T}$ and $(.)^{H}$ are used for denoting conjugate, transpose and conjugate-transpose operators. Finally, $\mathcal{CN}(0,\sigma^{2})$ represents the zero-mean circularly symmetric complex Gaussian (CSCG) distribution with variance $\sigma^{2}$. 

\section{System Model}\label{sec2sysmodel}
We consider the uplink of CF-mMIMO system with underlaid D2D communications in which $K$ single-antenna CFUEs communicate with $M$ distributed single-antenna APs; simultaneously, $L$ D2D pairs communicate in the considered system as shown in Fig. \ref{fig:systmdl}. Similar to \cite{xu2017pilot} and \cite{lin2015interplay}, we assume a single-antenna transmitter, for instance DUE$_{l}^{\text{tx}}$, and an $N$-antenna receiver counterpart , i.e. DUE$_{l}^{\text{rx}}$, for D2D communications\footnote{Please note that the results can be straightforwardly extended to the full MIMO scenario.}. Note that $K\!\! \ll \!\!M$ and all the communications take place in the same time-frequency resource. The TDD protocol is used to exploit the channel reciprocity for reducing channel estimation overhead. Also, APs are assumed to be equipped with low resolution ADCs for deployment cost reduction. In addition, CPU is an aggregation node where the resource allocation is performed and the received signal from different APs are collected to estimate the transmitted symbol of each use.
\subsection{Channel Model}
We consider Rayleigh fading channel model which is constant in each coherence interval of length $T$ [samples], and changes independently from one coherence interval to another. The channel between the $k$th CFUE for $k \in \mathcal{K} =\{1,2,...,K\}$ or the transmitter of $l$th DUE pair for $l \in \mathcal{L} =\{1,2,...,L\}$ and the $m$th AP (AP$_m$) is modeled by $h_{mk}^{c}\sim\mathcal{CN}(0,\beta_{mk}^{c})$  and $h_{ml}^{d}\sim\mathcal{CN}(0,\beta_{ml}^{d})$, respectively. Moreover, the channel between those transmitters and the receiver of the $l^{\prime}$th DUE pair is given by $\boldsymbol{g}_{l^{\prime}k}^{c}\sim\mathcal{CN}(0,\psi_{l^{\prime}k}^{c}\boldsymbol{I}_{N})$ and $\boldsymbol{g}_{l^{\prime}l}^{d}\sim\mathcal{CN}(0,\psi_{l^{\prime}l}^{d}\boldsymbol{I}_{N})$, where $\boldsymbol{I}_{N}$ is the $N\times N$ identity matrix and $\beta_{ml}^{d},$ $\beta_{mk}^{c},$ $\psi_{l^{\prime}l}^{d},$ $\psi_{l^{\prime}k}^{c}$ account for the large-scale fading coefficients. 

\begin{figure}[!t]
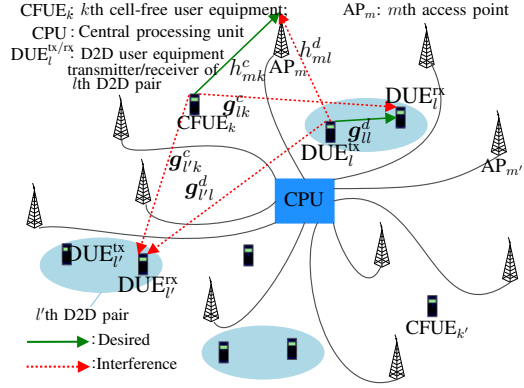

\centering
\psfragfig[scale=.3]{./Figures/SystemModel/sysDDCFrev8}{\psfrag{gmk}[][][0.73]{\ \ $h_{mk}^{c}$}
\psfrag{gklpc}[][][0.73]{\!\!$\boldsymbol{g}_{l^{\prime}k}^{c}$}
\psfrag{gklc}[][][0.73]{$\boldsymbol{g}_{lk}^{c}$}
\psfrag{APm}[][][0.63]{\ AP\textsubscript{$m$}}
\psfrag{gmld}[][][0.73]{\!\!\!\!$h_{ml}^{d}$}
\psfrag{UEk}[][][0.63]{CFUE\textsubscript{$k$}}
\psfrag{Dlt}[][][0.73]{DUE$_{l}^{\text{{tx}}}$}
\psfrag{Dlr}[][][0.73]{\!DUE$_{l}^{\text{{rx}}}$}
\psfrag{glpld}[][][0.73]{$\boldsymbol{g}_{l^{\prime}l}^{d}$}
\psfrag{Dlpt}[][][0.73]{DUE$_{l^{\prime}}^{\text{{tx}}}$}
\psfrag{Dlpr}[][][0.73]{DUE$_{l^{\prime}}^{\text{{rx}}}$}
\psfrag{APmp}[][][0.63]{AP\textsubscript{$m^{\prime}$}}
\psfrag{UEkp}[][][0.63]{CFUE\textsubscript{$k^{\prime}$}}
\psfrag{glld}[][][0.73]{$\boldsymbol{g}_{ll}^{d}$}
\psfrag{CPU}[][][0.65]{CPU}
\psfrag{UEk1}[][][0.57]{CFUE\textsubscript{$k$}}
\psfrag{Dltr}[][][0.57]{DUE$_{l}^{\text{{tx}}/\text{{rx}}}$}
\psfrag{CPU1}[][][0.57]{CPU}
\psfrag{CPU11111111111}[][][0.55]{\ \ \ \ \ Central processing unit}
\psfrag{UEk11111111111}[][][0.55]{\ \ \ \ \ \ \ \ \ \ \ \ \ \ \ \ \ \ \ \ \ \ \ \ \ \ \ \ \ \ \ \ \ \ \ \ \ \ \ \ \ \ \ \ \ \ \ \ \  $k$th cell-free user equipment;\ \ \ \ \ \ \ \ \ AP$_m$: $m$th access point}
\psfrag{Dltr1111111111}[][][0.53]{\ \ \ \ \ D2D user equipment}
\psfrag{Dltr1111111}[][][0.53]{transmitter/receiver of}
\psfrag{Dltr111111}[][][0.53]{$l$th D2D pair}
\psfrag{Dlp}[][][0.53]{$l^{\prime}$th D2D pair}
\psfrag{Desired}[][][0.53]{Desired}
\psfrag{Interference}[][][0.53]{Interference}}
\vspace{-0.2cm}
\caption{{Cell-free mMIMO with underlaid D2D communications system model.}}\label{fig:systmdl}
\end{figure}

\subsection{Modelling Impacts of Low Resolution ADC}
The received signal at the AP$_m$ and the receiver of $l$th DUE pair are respectively given by
\begin{subequations}
\begin{align}
\bar{y}_{m}^{c} = \sqrt{\rho^{c}}\sum\limits_{k =1}^{K}\sqrt{\eta_{k}^{c}}h_{mk}^{c}s_{k}^{c} + \sqrt{\rho^{d}}\sum\limits_{l^{\prime} =1}^{L}\sqrt{\eta_{l^{\prime}}^{d}}h_{ml^{\prime}}^{d}s_{l^{\prime}}^{d} + n_{m}^{c},\label{eqn:IdlADCap}\\
\boldsymbol{y}_{l}^{d} = \sqrt{\rho^{c}}\sum\limits_{k =1}^{K}\sqrt{\eta_{k}^{c}}\boldsymbol{g}_{lk}^{c}s_{k}^{c} + \sqrt{\rho^{d}}\sum\limits_{l^{\prime} =1}^{L}\sqrt{\eta_{l^{\prime}}^{d}}\boldsymbol{g}_{ll^{\prime}}^{d}s_{l^{\prime}}^{d} + \boldsymbol{n}_{l}^{d},\label{eqn:IdlADCd2d}
\end{align}
\end{subequations}
where $s_{k}^{c}\sim\mathcal{CN}(0,1)$ and $s_{l^{\prime}}^{d}\sim\mathcal{CN}(0,1)$ are the transmitted symbol by the $k$th CFUE and the transmitter of $l^{\prime}$ D2D pair, which is assumed to be independent and identically distributed (i.i.d) for different users. Furthermore, $\rho^{c},\ \rho^{d}$ and $\eta_{k}^{c},\ \eta_{l^{\prime}}^{d}$ denote the corresponding maximum transmit power and the power control coefficients of the users, respectively. Also,  $n_{m}^{c}\sim\mathcal{CN}(0,N_{0})$ is the additive white Gaussian noise (AWGN) at the $m$th AP and $\boldsymbol{n}_{l}^{d}$ is an $N\times 1$ vector of i.i.d AWGN random variables distributed according to $\mathcal{CN}(0,N_{0})$.
Since low resolution ADCs are used at the APs, the received signal \eqref{eqn:IdlADCap} for the $m$th AP is actually written as follows \cite{zhang2017mixedADCrician}
\begin{equation}\label{eqn:LADC}
  y_{m}^{c} = \xi \bar{y}_{m}^{c} + q_{m} = \xi\sqrt{\rho^{c}}\sum\limits_{k =1}^{K}\sqrt{\eta_{k}^{c}}h_{mk}^{c}s_{k}^{c} + \xi\sqrt{\rho^{d}}\sum\limits_{l^{\prime} =1}^{L}\sqrt{\eta_{l^{\prime}}^{d}}h_{ml^{\prime}}^{d}s_{l^{\prime}}^{d} + \xi n_{m}^{c} + q_{m}.
\end{equation}
In \eqref{eqn:LADC} we have used the so called additive quantization noise model (AQNM). Here, $q_{m}$ accounts for the quantization noise which is uncorrelated with $\bar{y}_{m}$. Given the channel realizations, for a non-uniform quantizer the variance of $q_{m}$ is computed as follows \cite{zhang2017mixedADCrician,zhou2016fronthaul}
\begin{equation}\label{eqn:Qvariance}
  Q_{m} = (1-\xi)\xi\mathbb{E}\{\bar{y}_{m}^{c^*}\bar{y}_{m}^{c}|\{h_{mk}^{c},h_{ml}^{d}\}\}
  = (1-\xi)\xi\left(\rho^{c}\sum\limits_{k =1}^{K}\eta_{k}^{c}\left|h_{mk}^{c}\right|^2 + \rho^{d}\sum\limits_{l^{\prime} =1}^{L}\eta_{l^{\prime}}^{d}\left|h_{ml^{\prime}}^{d}\right|^{2} + N_{0}\right).
\end{equation}
In \eqref{eqn:Qvariance}, $\xi$ can be specified in terms of the number of ADC quantization bits $b$; for $b>5$, $\xi$ is computed as $\xi = 1 - \frac{\pi\sqrt{3}}{2}2^{-2b}$ and for other values of $b$ it can be obtained from \cite[Table I]{zhang2017mixedADCrician}. 
\section{Achievable Rate Analysis}\label{sec3performance}
In this section, the uplink achievable rate of the system for the CFUEs and for the DUEs are derived under perfect CSI. Next, imperfect CSI is obtained using uplink channel training and the corresponding achievable data rates are derived. \textcolor{black}{These derivations are important in designing resource allocation algorithms for power control and pilot assignment as well as analysing and gaining insights on the impacts of using low resolution ADCs.}  
\subsection{Uplink Achievable Rate with Perfect CSI}
\subsubsection{Achievable rate of CFUEs} When perfect CSI is available\footnote{When low resolution ADCs are used, there will be an error floor for the estimated channels even with high power orthogonal pilots \cite{zhang2018low}. However, by assuming that one uses high resolution ADCs only in the channel estimation phase, perfect CSI can be achieved by sending high power and orthogonal pilots. Hence, here studying the perfect CSI case provides  a performance upper limit to the performance of the system with low resolution ADCs during channel estimation and can be used as a benchmark to evaluate the performance of the low resolution ADC-aware channel estimation algorithms in the future works.}, by using \eqref{eqn:LADC} and MRC receiver the following approximation of transmitted symbol for the $k$th CFUE at the CPU can be derived.
\begin{equation}\label{eqn:PCSI_MRC}
\begin{split}
  r_{k}^{c} &= \sum\limits_{m\in\mathcal{M}_{k}}^{}h_{mk}^{c^{*}}y_{m}^{c} =  \underbrace{\xi\sqrt{\eta_{k}^{c}\rho^{c}}\sum\limits_{m\in\mathcal{M}_{k}}^{}|h_{mk}^{c}|^{2}}_{\text{DS\textsubscript{$k$}: desired signal}}s_{k}^{c}   +   \mathcal{I}_{k}^{c},\\
  \mathcal{I}_{k}^{c} &= \underbrace{\xi\!\sqrt{\!\rho^{c}}\sum\limits_{k^{\prime}\neq k} ^ {K}\!\! \sqrt{\eta_{k^{\prime}}^{c}}\sum\limits_{m\in\mathcal{M}_{k}}^{}h_{mk}^{c^*}h_{mk^{\prime}}^{c}s_{k^{\prime}}^{c}}_{\text{ICFUE\textsubscript{$k$}: interference from CFUEs}} + \underbrace{\xi\!\sqrt{\!\rho^{d}}\sum\limits_{l^{\prime}=1} ^ {L}\!\! \sqrt{\!\eta_{l^{\prime}}^{d}}\sum\limits_{m\in\mathcal{M}_{k}}^{}\!\!h_{mk}^{c^*}h_{ml^{\prime}}^{d}s_{l^{\prime}}^{d}}_{\text{IDUE\textsubscript{$k$}: interference from DUEs}} +  \underbrace{\xi\!\sum\limits_{m\in\mathcal{M}_{k}}^{}\!h_{mk}^{c^*}n_{m}}_{\text{TN\textsubscript{$k$}: total noise}} +\!\!\!\!\! \underbrace{\sum\limits_{m\in\mathcal{M}_{k}}^{}\!h_{mk}^{c^*}q_{m}}_{\text{QN\textsubscript{$k$}: quantization noise}}\!\!\!,
  \end{split}
\end{equation}
 where $\mathcal{I}_{k}^{c}$ consists of inter-user interference from both CFUEs and D2D transmitters, the additive channel noise and the quantization noise as the result of deploying low resolution ADCs at the APs. It can be shown that for a given channel realization all the terms in \eqref{eqn:PCSI_MRC} are mutually uncorrelated. 
 To make the system scalable, each UE is served by a limited number of APs.
 This approach is known as user-centric (UC) method \cite{buzzi2019user}. So the $m$th AP serves only $\mathcal{K}_{m}\subset \mathcal{K}$ users based on the strength of the channel coefficients. We also denote by $\mathcal{M}_{k}$ the set of APs serving the $k$th user. In this paper, the set $\mathcal{K}_{m}$ is determined by $\sum\limits_{i=1}^{|\mathcal{K}_{m}|}\frac{\check{\beta}_{mi}^{c}}{\sum_{i=1}^{K}\beta_{mi}^{c}}\geq \delta$\footnote{We have used this criterion to circumvent the prohibitive computation of combinatorial user-AP association. This criterion is also used for other tasks; like distinguishing among weak cell-edge and strong cell-center users in \cite{zhu2015soft}.}
where ${\check{\beta}_{m1}^{c}>\check{\beta}_{m2}^{c}> ... >\check{\beta}_{mK}^{c}}$ are the sorted version of the large-scale fading coefficients between the users and the $m$th AP in descending order. Furthermore, the threshold $\delta \in [0,1]$. 
Next, in order to obtain a \textit{closed-form} expression for the achievable rate, the well-known \textit{use and then forget} (UatF) technique \cite{bjornson2017massive} is applied to the statistic \eqref{eqn:PCSI_MRC}, and results in 
\begin{equation}\label{eqn:PCSI_UatF1}
r_{k} = \E\{\text{DS}_{k}\}s_{k} + \text{BU}_{k}s_{k} + \mathcal{I}_{k}^{c},    
\end{equation}
where $\text{BU}_{k} = \{\text{DS}_{k} -\E\{\text{DS}_{k}\}\}$ stands for beamforming uncertainty which is caused by using only channel statistics for data detection. Since all the terms in \eqref{eqn:PCSI_UatF1} are also mutually uncorrelated, by considering that the last two interfering terms follow the worst case Gaussian distribution, the achievable rate is given by 
\begin{equation}\label{eqn:PCSI_UatF}
    R_{k}^{\text{CFUE\textsubscript{p}}} = \log_{2}\left(1 + \dfrac{\lvert\E\{\text{DS}_{k}\}\rvert^{2}}{\text{Var}(\text{BU}_{k}) + \text{Var}(\mathcal{I}_{k}^{c}) }\right).
\end{equation}
Throughout the paper the subscripts ``p'' and ``ip'' stand for \textit{perfect} and \textit{imperfect} CSI, respectively, and Var$(.)$ indicates the variance operator.
\begin{theorem} The closed-form  achievable rate of $k$th CFUE with perfect CSI is
 \begin{equation}\label{eqn:PCSI_apx}
     \!\!R_{k}^{\text{CFUE\textsubscript{p}}} =\! \log_{2}\!\!\left(\!\!1\! +\! \dfrac{\xi\eta_{k}^{c}\rho^{c}\left(\sum\limits_{m\in\mathcal{M}_{k}}^{}\beta_{mk}^{c}\right)^{2}}{ \rho^{c}\! \sum\limits_{k^{\prime}=1} ^ {K}\!\! \eta_{k^{\prime}}^{c}\!\sum\limits_{m\in\mathcal{M}_{k}}^{}\!\! \beta_{mk}^{c}\beta_{mk^{\prime}}^{c} \!+\! \rho^{d}\! \sum\limits_{l^{\prime}=1} ^ {L}\! \eta_{l^{\prime}}^{d}\!\!\sum\limits_{m\in\mathcal{M}_{k}}^{}\!\! \beta_{mk}^{c}\beta_{ml^{\prime}}^{d} \!+\! (1\!-\!\xi)\rho^{c}\eta_{k}^{c}\!\!\sum\limits_{m\in\mathcal{M}_{k}}^{}\! \beta_{mk}^{c^{2}} \!+\! N_{0}\!\!\sum\limits_{m\in\mathcal{M}_{k}}^{}\!\! \beta_{mk}^{c}}\!\!\right)\!\!.
 \end{equation}
\end{theorem}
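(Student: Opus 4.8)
The plan is to evaluate separately the three statistical quantities that enter \eqref{eqn:PCSI_UatF}: the squared mean of the desired signal, $\text{Var}(\text{BU}_k)$, and $\text{Var}(\mathcal{I}_k^c)$. The two facts that make everything tractable are that the channel coefficients $h_{mk}^c$ are zero-mean, independent across $m$ and $k$, with $\E\{|h_{mk}^c|^2\}=\beta_{mk}^c$, and that the four terms composing $\mathcal{I}_k^c$ in \eqref{eqn:PCSI_MRC} are mutually uncorrelated, so that $\text{Var}(\mathcal{I}_k^c)$ splits into the sum of four variances. I would also use the circularly-symmetric Gaussian moment identities $\text{Var}(|h_{mk}^c|^2)=(\beta_{mk}^c)^2$ and $\E\{|h_{mk}^c|^4\}=2(\beta_{mk}^c)^2$.

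First, from $\E\{|h_{mk}^c|^2\}=\beta_{mk}^c$ one immediately gets $\E\{\text{DS}_k\}=\xi\sqrt{\eta_k^c\rho^c}\sum_{m\in\mathcal{M}_k}\beta_{mk}^c$, hence $\lvert\E\{\text{DS}_k\}\rvert^2=\xi^2\eta_k^c\rho^c(\sum_{m\in\mathcal{M}_k}\beta_{mk}^c)^2$, which is the numerator up to the single factor of $\xi$ that will cancel at the end. For the beamforming-uncertainty term, writing $\text{BU}_k=\xi\sqrt{\eta_k^c\rho^c}\sum_{m\in\mathcal{M}_k}(|h_{mk}^c|^2-\beta_{mk}^c)$ and using independence across APs gives $\text{Var}(\text{BU}_k)=\xi^2\eta_k^c\rho^c\sum_{m\in\mathcal{M}_k}(\beta_{mk}^c)^2$.

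Next I would treat the interference and noise terms. For $\text{ICFUE}_k$ and $\text{IDUE}_k$, the symbols $s_{k'}^c$, $s_{l'}^d$ have unit variance and are independent of the channels, so expanding the squared modulus and invoking zero-mean independence across APs (only the diagonal $m=m'$ survives) and across distinct interferers yields $\text{Var}(\text{ICFUE}_k)=\xi^2\rho^c\sum_{k'\neq k}\eta_{k'}^c\sum_{m\in\mathcal{M}_k}\beta_{mk}^c\beta_{mk'}^c$ and $\text{Var}(\text{IDUE}_k)=\xi^2\rho^d\sum_{l'=1}^{L}\eta_{l'}^d\sum_{m\in\mathcal{M}_k}\beta_{mk}^c\beta_{ml'}^d$, while $\text{Var}(\text{TN}_k)=\xi^2 N_0\sum_{m\in\mathcal{M}_k}\beta_{mk}^c$; note that $\text{Var}(\text{BU}_k)+\text{Var}(\text{ICFUE}_k)$ completes the sum over $k'\neq k$ into a sum over all $k'$. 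The quantization term is the one requiring care: because the $q_m$ are uncorrelated across APs and uncorrelated with $\bar y_m^c$, $\text{Var}(\text{QN}_k)=\sum_{m\in\mathcal{M}_k}\E\{|h_{mk}^c|^2 Q_m\}$ with $Q_m$ as in \eqref{eqn:Qvariance}. Substituting $Q_m$ and taking expectation over the channels, the term $k'=k$ inside $Q_m$ produces $\E\{|h_{mk}^c|^4\}=2(\beta_{mk}^c)^2$ while every other factor splits into a product of second moments; the net effect is $\text{Var}(\text{QN}_k)=(1-\xi)\xi\sum_{m\in\mathcal{M}_k}\beta_{mk}^c\bigl(\rho^c\sum_{k'}\eta_{k'}^c\beta_{mk'}^c+\rho^d\sum_{l'}\eta_{l'}^d\beta_{ml'}^d+N_0+\rho^c\eta_k^c\beta_{mk}^c\bigr)$.

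Finally I would add the five variances. Each of the three ``cross'' contributions---CFUE interference (summed over all $k'$ after merging with $\text{BU}_k$), DUE interference, and additive noise---appears once with coefficient $\xi^2$ from the $\xi\bar y_m^c$ path and once with coefficient $\xi(1-\xi)$ from the $\text{QN}_k$ path, and since $\xi^2+\xi(1-\xi)=\xi$ these merge cleanly; the only surviving extra term is $(1-\xi)\xi\rho^c\eta_k^c\sum_{m\in\mathcal{M}_k}(\beta_{mk}^c)^2$ from $\text{Var}(\text{QN}_k)$. Dividing $\lvert\E\{\text{DS}_k\}\rvert^2$ by this total and cancelling one factor of $\xi$ reproduces \eqref{eqn:PCSI_apx}. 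I expect the main obstacle to be the bookkeeping inside $\text{Var}(\text{QN}_k)$: correctly isolating the self ($k'=k$) fourth-moment term from the product-of-second-moments terms, and justifying that the cross-AP and signal--quantization cross terms vanish so that only the conditional variance $Q_m$ contributes.
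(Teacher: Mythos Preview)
Your proposal is correct and follows essentially the same approach as the paper's proof in Appendix~A: compute $\E\{\text{DS}_k\}$, $\text{Var}(\text{BU}_k)$, and the four pieces of $\text{Var}(\mathcal{I}_k^c)$ separately, handling $\text{QN}_k$ via the conditional variance $Q_m$ and isolating the $k'=k$ fourth-moment term. Your explicit observation that the $\xi^2$ and $\xi(1-\xi)$ coefficients collapse to $\xi$ (and that $\text{Var}(\text{BU}_k)$ completes the sum over $k'$) is a nice way to organize the final combination, which the paper leaves implicit.
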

\begin{proof}
See Appendix A.
\end{proof}
\begin{Remark}
When the bits of ADCs tend to infinity, i.e. $b\rightarrow\infty$, the SE of CFUEs reduces to 
\begin{equation}\label{eqn:PCSI_infb}
     \!\!R_{k,\text{\textbf{I}}}^{\text{CFUE\textsubscript{p}}} =\! \log_{2}\!\!\left(\!\!1\! +\! \dfrac{\eta_{k}^{c}\rho^{c}\left(\sum\limits_{m\in\mathcal{M}_{k}}^{}\beta_{mk}^{c}\right)^{2}}{ \rho^{c}\! \sum\limits_{k^{\prime}=1} ^ {K}\!\! \eta_{k^{\prime}}^{c}\!\sum\limits_{m\in\mathcal{M}_{k}}^{}\!\! \beta_{mk}^{c}\beta_{mk^{\prime}}^{c} \!+\! \rho^{d}\! \sum\limits_{l^{\prime}=1} ^ {L}\! \eta_{l^{\prime}}^{d}\!\!\sum\limits_{m\in\mathcal{M}_{k}}^{}\!\! \beta_{mk}^{c}\beta_{ml^{\prime}}^{d} \!+\! N_{0}\!\!\sum\limits_{m\in\mathcal{M}_{k}}^{}\!\! \beta_{mk}^{c}}\!\!\right)\!\!.
 \end{equation}
The above result follows from the fact that $\lim\limits_{b \to \infty} \xi = 1$. Additionally, by setting $\mathcal{M}_k=\{1,2,...,M\}$ and $\rho^{d}=0$, equation \eqref{eqn:PCSI_infb} reduces to the SE of original CF-mMIMO \cite{ngo2017cell}.
\end{Remark}
\begin{Remark}
When users' transmit power increases without bound, i.e. $\rho^{c}\rightarrow\infty$ and $\rho^{d}\rightarrow\infty$ with power control coefficients equal to one, CFUEs' SE reduces to \begin{equation}\label{eqn:PCSI_infp}
     \!\!R_{k,\text{\textbf{II}}}^{\text{CFUE\textsubscript{p}}} =\! \log_{2}\!\!\left(\!\!1\! +\! \dfrac{\xi\left(\sum\limits_{m\in\mathcal{M}_{k}}^{}\beta_{mk}^{c}\right)^{2}}{ \! \sum\limits_{k^{\prime}=1} ^ {K}\sum\limits_{m\in\mathcal{M}_{k}}^{}\!\! \beta_{mk}^{c}\beta_{mk^{\prime}}^{c} \!+\!  \sum\limits_{l^{\prime}=1} ^ {L}\sum\limits_{m\in\mathcal{M}_{k}}^{}\!\! \beta_{mk}^{c}\beta_{ml^{\prime}}^{d} \!+\! (1\!-\!\xi)\!\!\sum\limits_{m\in\mathcal{M}_{k}}^{}\! \beta_{mk}^{c^{2}}}\!\!\right)\!\!.
 \end{equation}
From \eqref{eqn:PCSI_infp} we observe that by using low resolution ADCs not only an additional interference is added to the denominator of the  SINR, the numerator also scales down linearly by $0<\xi< 1$ and its impact becomes more severe for coarser ADCs. Even by ignoring inter-user interference, degrading effect of low resolution ADCs will be present when users transmit with high power.
\end{Remark}
\subsubsection{Achievable rate of DUEs} By having perfect CSI, MRC combining technique can be applied at the receiver of the $l$th D2D pair, i.e. equation \eqref{eqn:IdlADCd2d}, which leads to
\begin{equation}\label{eqn:PCSI_MRCd2d}
  r_{l}^{d} = \boldsymbol{g}_{ll}^{d^{H}}\boldsymbol{y}_{l}^{d} = \underbrace{\sqrt{\rho^{d}\eta_{l}^{d}}\left\|\boldsymbol{g}_{ll}^{d}\right\|^{2}}_{\text{DS\textsubscript{$l$}}}s_{l}^{d} \!+\! \underbrace{\sqrt{\!\rho^{c}}\sum\limits_{k =1}^{K}\!\!\sqrt{\eta_{k}^{c}}\boldsymbol{g}_{ll}^{d^{H}}\boldsymbol{g}_{lk}^{c}s_{k}^{c}}_{\text{ICFUE\textsubscript{$l$}}} + \underbrace{\sqrt{\rho^{d}}\sum\limits_{l^{\prime} \neq l}^{L}\!\!\sqrt{\eta_{l^{\prime}}^{d}}\boldsymbol{g}_{ll}^{d^{H}}\boldsymbol{g}_{ll^{\prime}}^{d}s_{l^{\prime}}^{d}}_{\text{IDUE\textsubscript{$l$}}} + \underbrace{\boldsymbol{g}_{ll}^{d^{H}}\boldsymbol{n}_{l}^{d}}_{\text{TN\textsubscript{$l$}}}.
\end{equation}
Since for given channel realizations, the interference terms, i.e. ICFUE\textsubscript{$l$}, IDUE\textsubscript{$l$} and TN\textsubscript{$l$}, and the desired signal follow a Gaussian distribution and are mutually independent from one another, the ergodic achievable rate for the receiver of $l$th D2D pair is derived as
 \begin{equation}\label{eqn:PCSI_d2d1}
     R_{l}^{\text{DUE\textsubscript{p}}} =\! \mathbb{E}\left\lbrace\log_{2}\!\left(\!1\! +\! \dfrac{\rho^{d}\eta_{l}^{d}\left\|\boldsymbol{g}_{ll}^{d}\right\|^{4}}{\rho^{c}\sum\limits_{k =1}^{K}\!\eta_{k}^{c}\left|\boldsymbol{g}_{ll}^{d^{H}}\boldsymbol{g}_{lk}^{c}\right|^{2} \!+\! \rho^{d}\sum\limits_{l^{\prime} \neq l}^{L}\!\eta_{l^{\prime}}^{d}\left|\boldsymbol{g}_{ll}^{d^{H}}\boldsymbol{g}_{ll^{\prime}}^{d}\right|^{2} \!+\! N_{0}\left\|\boldsymbol{g}_{ll}^{d}\right\|^{2}}\!\right)\!\right\rbrace.
 \end{equation}
 Next, we use the Jensen's inequality\footnote{Note that, for D2D users the UatF bounding leads to a considerably underestimated achievable rate mainly because D2D users are not equipped with large number of antennas. In contrast, thanks to the large number of access points in the cell-free system, the UatF provides a tight bound for the achievable rate of cell-free users \cite{marzetta2016fundamentals}}, given by \eqref{eqn:jensen}, to obtain a closed-form expression for \eqref{eqn:PCSI_d2d1} which is provided in Theorem 2.
 \begin{equation}\label{eqn:jensen}
     \log\left(1 + \dfrac{1}{\mathbb{E}\left\lbrace x \right\rbrace}\right) \leq \mathbb{E}\left\lbrace \log\left(1 + \dfrac{1}{x}\right) \right\rbrace,
 \end{equation}
 \begin{theorem} The achievable rate of $l$th DUE with perfect CSI and $N\geq 2$ is
 \begin{equation}\label{eqn:PCSI_apx}
     \!\!\tilde{R}_{l}^{\text{DUE\textsubscript{p}}} =\! \log_{2}\!\!\left(\!\!1\! +\! \dfrac{\rho^{d}\eta_{l}^{d}\psi_{ll}^{d}(N-1)}{\rho^{c}\sum\limits_{k =1}^{K}\!\eta_{k}^{c}\psi_{lk}^{c} \!+ \rho^{d}\sum\limits_{l^{\prime} \neq l}^{L}\!\eta_{l^{\prime}}^{d}\psi_{ll^{\prime}}^{d} \!+\! N_{0}}\!\!\right)\!\!.
 \end{equation}
\end{theorem}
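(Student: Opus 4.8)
The plan is to lower-bound the ergodic rate \eqref{eqn:PCSI_d2d1} via Jensen's inequality \eqref{eqn:jensen}. I would introduce the random variable
\begin{equation*}
x \eqdef \frac{\rho^{c}\sum_{k=1}^{K}\eta_{k}^{c}\left|\boldsymbol{g}_{ll}^{d^{H}}\boldsymbol{g}_{lk}^{c}\right|^{2}+\rho^{d}\sum_{l^{\prime}\neq l}^{L}\eta_{l^{\prime}}^{d}\left|\boldsymbol{g}_{ll}^{d^{H}}\boldsymbol{g}_{ll^{\prime}}^{d}\right|^{2}+N_{0}\left\|\boldsymbol{g}_{ll}^{d}\right\|^{2}}{\rho^{d}\eta_{l}^{d}\left\|\boldsymbol{g}_{ll}^{d}\right\|^{4}},
\end{equation*}
so that the SINR inside the logarithm of \eqref{eqn:PCSI_d2d1} is exactly $1/x$. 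Since $t\mapsto\log(1+1/t)$ is convex on $(0,\infty)$, applying \eqref{eqn:jensen} gives $R_{l}^{\text{DUE\textsubscript{p}}}\geq\log_{2}\!\bigl(1+1/\E\{x\}\bigr)\eqdef\tilde{R}_{l}^{\text{DUE\textsubscript{p}}}$, so it only remains to evaluate $\E\{x\}$ in closed form.

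To do so I would treat the three groups of terms in the numerator separately, conditioning throughout on $\boldsymbol{g}_{ll}^{d}$. For the interference from the CFUEs, $\boldsymbol{g}_{lk}^{c}\sim\mathcal{CN}(0,\psi_{lk}^{c}\boldsymbol{I}_{N})$ is independent of $\boldsymbol{g}_{ll}^{d}$, so conditionally on $\boldsymbol{g}_{ll}^{d}$ the scalar $\boldsymbol{g}_{ll}^{d^{H}}\boldsymbol{g}_{lk}^{c}$ is zero-mean complex Gaussian with variance $\psi_{lk}^{c}\|\boldsymbol{g}_{ll}^{d}\|^{2}$, whence $\E\{|\boldsymbol{g}_{ll}^{d^{H}}\boldsymbol{g}_{lk}^{c}|^{2}\mid\boldsymbol{g}_{ll}^{d}\}=\psi_{lk}^{c}\|\boldsymbol{g}_{ll}^{d}\|^{2}$; an identical argument with $\boldsymbol{g}_{ll^{\prime}}^{d}$ ($l^{\prime}\neq l$) gives $\E\{|\boldsymbol{g}_{ll}^{d^{H}}\boldsymbol{g}_{ll^{\prime}}^{d}|^{2}\mid\boldsymbol{g}_{ll}^{d}\}=\psi_{ll^{\prime}}^{d}\|\boldsymbol{g}_{ll}^{d}\|^{2}$. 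Substituting these into $x$, dividing by $\|\boldsymbol{g}_{ll}^{d}\|^{4}$, and using the tower rule, every term collapses to a scalar multiple of $\E\{1/\|\boldsymbol{g}_{ll}^{d}\|^{2}\}$, and one obtains
\begin{equation*}
\E\{x\}=\frac{\rho^{c}\sum_{k=1}^{K}\eta_{k}^{c}\psi_{lk}^{c}+\rho^{d}\sum_{l^{\prime}\neq l}^{L}\eta_{l^{\prime}}^{d}\psi_{ll^{\prime}}^{d}+N_{0}}{\rho^{d}\eta_{l}^{d}}\;\E\!\left\{\frac{1}{\|\boldsymbol{g}_{ll}^{d}\|^{2}}\right\}.
\end{equation*}

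The only delicate point, and the step that forces the hypothesis $N\geq2$, is the reciprocal moment $\E\{1/\|\boldsymbol{g}_{ll}^{d}\|^{2}\}$. Since $\|\boldsymbol{g}_{ll}^{d}\|^{2}/\psi_{ll}^{d}$ is the sum of $N$ i.i.d.\ unit-mean exponential random variables, i.e.\ a $\mathrm{Gamma}(N,1)$ random variable, its reciprocal has mean $1/(N-1)$, which is finite precisely for $N\geq2$ (for $N=1$ it diverges and the bound is vacuous); hence $\E\{1/\|\boldsymbol{g}_{ll}^{d}\|^{2}\}=1/\bigl(\psi_{ll}^{d}(N-1)\bigr)$. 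Plugging this in yields $1/\E\{x\}=\rho^{d}\eta_{l}^{d}\psi_{ll}^{d}(N-1)\big/\bigl(\rho^{c}\sum_{k}\eta_{k}^{c}\psi_{lk}^{c}+\rho^{d}\sum_{l^{\prime}\neq l}\eta_{l^{\prime}}^{d}\psi_{ll^{\prime}}^{d}+N_{0}\bigr)$, and substituting into $\log_{2}(1+1/\E\{x\})$ gives the stated closed-form expression. I expect the main obstacle to be nothing more than careful bookkeeping of the conditioning step so that the cross-term variances factor out correctly; once that is done, the Gamma reciprocal-moment identity finishes the proof.
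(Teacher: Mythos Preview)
Your proposal is correct and follows essentially the same route as the paper's proof: both apply the Jensen bound \eqref{eqn:jensen} to the reciprocal SINR, exploit that the projections $\boldsymbol{g}_{ll}^{d^{H}}\boldsymbol{g}_{lk}^{c}$ and $\boldsymbol{g}_{ll}^{d^{H}}\boldsymbol{g}_{ll^{\prime}}^{d}$ are conditionally Gaussian with variance proportional to $\|\boldsymbol{g}_{ll}^{d}\|^{2}$, and finish with the reciprocal moment of $\|\boldsymbol{g}_{ll}^{d}\|^{2}$. The only cosmetic differences are that the paper normalizes by $\|\boldsymbol{g}_{ll}^{d}\|$ to define scalar variables $\bar g_{lk}^{d,c}$ whose distribution does not depend on $\boldsymbol{g}_{ll}^{d}$ (and then factors the expectation as a product), whereas you use the tower rule directly, and the paper computes $\mathbb{E}\{1/\|\boldsymbol{g}_{ll}^{d}\|^{2}\}$ via the inverse chi-squared with $2N$ degrees of freedom while you use the equivalent $\mathrm{Gamma}(N,1)$ characterization.
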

\begin{proof}
See Appendix B.
\end{proof}
\subsection{Uplink Achievable Rate with Imperfect CSI}
In this subsection we first present the uplink channel estimation procedure, and then the achievable data rates of CFUEs and DUEs are derived using the obtained estimates.
\subsubsection{Channel estimation}
For obtaining channel estimates, $\tau$-length orthogonal pilot sequences, denoted by $\boldsymbol{\Phi} = \left\lbrace\boldsymbol{\phi}_{1}, \boldsymbol{\phi}_{2}, ..., \boldsymbol{\phi}_{\tau}\right\rbrace$, are considered, where $\boldsymbol{\phi}_{u}^{H}\boldsymbol{\phi}_{v}=\delta_{uv}$  and $\boldsymbol{\phi}_{u}\!\in\!\mathbb{C}^{\tau\times 1},\ \{u,v\} = 1,2,...,\tau$. Hence, the channel estimation overhead is $\varsigma = \frac{T-\tau}{T}$. The assigned pilots for CFUE $k$ and DUE $l$ are denoted by $\boldsymbol{\omega}_{k}\in\boldsymbol{\Phi}$ and $\boldsymbol{\theta}_{l}\in\boldsymbol{\Phi}$, respectively. Also, the total transmit power and the power control coefficients of the $k$th CFUE and the $l$th DUE are indicated by $\rho_{p}^{c},\ \mu_{k}^{c}$ and $\rho_{p}^{d},\ \mu_{l}^{d}$, respectively. Thus, the $m$th AP receives a $\tau\times 1$ vector $\boldsymbol{y}_{p,m}^{c}$, and the receiver of the $l$th D2D pair receives an $N\times\tau$ matrix $\boldsymbol{Y}_{p,l}^{d}$ as follows
\begin{subequations}
\begin{align}
\boldsymbol{y}_{p,m}^{c} &= \xi\sqrt{\tau\rho_{p}^{c}} \sum\limits_{k =1}^{K}\sqrt{\mu_{k}^{c}}h_{mk}^{c}\boldsymbol{\omega}_{k} + \xi\sqrt{\tau\rho_{p}^{d}}\sum\limits_{l^{\prime} =1}^{L}\sqrt{\mu_{l^{\prime}}^{d}}h_{ml^{\prime}}^{d}\boldsymbol{\theta}_{l^{\prime}} + \xi \boldsymbol{n}_{p,m}^{c} + \boldsymbol{q}_{p,m},\label{eqn:Rxpilot}\\
\boldsymbol{Y}_{p,l}^{d} &= \sqrt{\tau\rho_{p}^{c}} \sum\limits_{k =1}^{K}\sqrt{\mu_{k}^{c}}\boldsymbol{g}_{lk}^{c}\boldsymbol{\omega}_{k}^{H} + \sqrt{\tau\rho_{p}^{d}}\sum\limits_{l^{\prime} =1}^{L}\sqrt{\mu_{l^{\prime}}^{d}}\boldsymbol{g}_{ll^{\prime}}^{d}\boldsymbol{\theta}_{l^{\prime}}^{H} +  \boldsymbol{N}_{p,l}^{d}.\label{eqn:Rxpilotd2d}
\end{align}
\end{subequations}
In the above equations $\boldsymbol{q}_{p,m}$ is a $\tau\times 1$ quantization noise vector whose covariance matrix is defined as $\boldsymbol{Q}_{p,m} = (1-\xi)\xi\mathbb{E}\left\lbrace \bar{\boldsymbol{y}}_{p,m}^{c} \bar{\boldsymbol{y}}_{p,m}^{c^{H}}\right\rbrace$, with $\bar{\boldsymbol{y}}_{p,m}^{c}$ the received signal at the $m$th AP for the case of infinite resolution ADCs. Furthermore, the $\tau\times 1$ vector $\boldsymbol{n}_{p,m}^{c}$ and the $N\times\tau$ matrix $\boldsymbol{N}_{p,l}^{d}$ are additive white noises contributions with i.i.d entries distributed according to $\mathcal{CN}(0,N_{0})$. After projecting the received signals onto the used pilot sequences the channel between $k$th CFUE and AP $m$ is estimated using linear MMSE (LMMSE)\footnote{Some algorithms based on least-squares, expectation-maximization, maximum likelihood, and joint channel-and-data are used to alleviate low resolution ADCs impact on channel estimation \cite{zhang2018low}. Many of these algorithms rely on long pilot sequence lengths or they have high complexity \cite{zhang2018low}. Here, we focus on the co-existence of CF-mMIMO and D2D communications where low resolution ADCs are used to make the system cost- and power-efficient. From this perspective, our results provide a worst-case system performance using simpler MMSE estimation technique, which can be used as a benchmark for evaluating the effectiveness of other schemes in future works.} as follows 
\begin{equation}\label{eqn:ChEst}
 \hat{h}_{mk}^{c} = \frac{\sqrt{\tau\rho_{p}^{c}\mu_{k}^{c}} \beta_{mk}^{c}}{\tau\rho_{p}^{c}\sum\limits_{k^{\prime} =1}^{K}\mu_{k^{\prime}}^{c}\beta_{mk^{\prime}}^{c}\left|\boldsymbol{\omega}_{k}^{H}\boldsymbol{\omega}_{k^{\prime}}\right|^{2} + \tau\rho_{p}^{d}\sum\limits_{l^{\prime} =1}^{L}\mu_{l^{\prime}}^{d}\beta_{ml^{\prime}}^{d}\left|\boldsymbol{\omega}_{k}^{H}\boldsymbol{\theta}_{l^{\prime}}\right|^{2} + N_{0}}y_{p,mk}^{c} = \lambda_{mk}^{c}y_{p,mk}^{c},
\end{equation}
where $y_{p,mk}^{c} \!=\! \boldsymbol{\omega}_{k}^{H}\boldsymbol{y}_{p,m}^{c}$ and the variance of the channel estimate is given by $\gamma_{mk}^{c} \!=\! \xi\!\sqrt{\tau\rho_{p}^{c}\mu_{k}^{c}} \beta_{mk}^{c}\lambda_{mk}^{c}$.
\begin{Remark}
Mean square error (MSE) of the estimated channel of any $k$th CFUE tend to the following limit when transmitted pilot power grows without bound.
\begin{equation}\label{eqn:mseLimit}
 \lim\limits_{\{\rho_{p}^{c}, \rho_{p}^{d}\} \to \infty} \text{MSE}_{mk} = \beta_{mk}^{c}\left(1-\xi\frac{ \beta_{mk}^{c}}{\sum\limits_{k^{\prime} =1}^{K}\!\!\beta_{mk^{\prime}}^{c}\!\left|\boldsymbol{\omega}_{k}^{H}\boldsymbol{\omega}_{k^{\prime}}\right|^{2} + \sum\limits_{l^{\prime} =1}^{L}\!\!\beta_{ml^{\prime}}^{d}\!\left|\boldsymbol{\omega}_{k}^{H}\boldsymbol{\theta}_{l^{\prime}}\right|^{2}}\right)\!,
\end{equation}
where $\text{MSE}_{mk}=\mathbb{E}\left\lbrace\left|h_{mk}^{c}-\hat{h}_{mk}^{c}\right|^2\right\rbrace$. Based on \eqref{eqn:mseLimit}, in addition to the pilot contamination, using low resolution ADCs also leads to channel estimation degradation  that does not disappear by increasing the pilot power. Even by using orthogonal pilots, there will be an error floor due to the low resolution ADC utilization. When orthogonal pilots are used, the error floor is equal to $\beta_{mk}^{c}\left(1-\xi\right)$ which will vanish only when high precision ADCs are used.
\end{Remark}
The estimation of the D2D channels are obtained from \eqref{eqn:Rxpilotd2d} as follows
\begin{equation}\label{eqn:ChEstd2d}
 \hat{\boldsymbol{g}}_{ll}^{d} = \frac{\sqrt{\tau\rho_{p}^{d}\mu_{l}^{d}} \psi_{ll}^{d}}{\tau\rho_{p}^{c}\sum\limits_{k =1}^{K}\mu_{k}^{c}\psi_{lk}^{c}\left|\boldsymbol{\omega}_{k}^{H}\boldsymbol{\theta}_{l}\right|^{2} + \tau\rho_{p}^{d}\sum\limits_{l^{\prime} =1}^{L}\mu_{l^{\prime}}^{d}\psi_{ll^{\prime}}^{d}\left|\boldsymbol{\theta}_{l^{\prime}}^{H}\boldsymbol{\theta}_{l}\right|^{2} + N_{0}}\boldsymbol{y}_{p,ll}^{d} = \lambda_{ll}^{d}\boldsymbol{y}_{p,ll}^{d},
\end{equation}
where $\boldsymbol{y}_{p,ll}^{d} = \boldsymbol{Y}_{p,l}^{d}\boldsymbol{\theta}_{l}$ and $\mathbb{E}\left\lbrace\hat{\boldsymbol{g}}_{ll}^{d}\hat{\boldsymbol{g}}_{ll}^{d^{H}}\right\rbrace = \sqrt{\tau\rho_{p}^{d}\mu_{l}^{d}} \psi_{ll}^{d}\lambda_{ll}^{d}\boldsymbol{I}_{N\times N} = \gamma_{ll}^{d}\boldsymbol{I}_{N\times N}$.
\subsubsection{Achievable rate}
In the following, based on the estimated channels and by applying a MRC receiver, the achievable data rate of the CFUEs and DUEs are presented.
 \begin{theorem} The closed form achievable data rate for $k$th CFUE with imperfect CSI is given by
 \begin{equation}\label{eqn:IPCSI_apx}
     \!\!R_{k}^{\text{CFUE\textsubscript{ip}}} \!=\!\varsigma\!\log_{2}\!\!\left(\!\!1\! +\! \dfrac{\xi^{2}\eta_{k}^{c}\rho^{c}\left(\sum\limits_{m\in\mathcal{M}_{k}}^{}\gamma_{mk}^{c}\right)^{2}}{\splitfrac{\xi\rho^{c}\!\! \sum\limits_{k^{\prime}=1} ^ {K}\!\! \eta_{k^{\prime}}^{c}\!\!\!\sum\limits_{m\in\mathcal{M}_{k}}^{}\!\!\! \gamma_{mk}^{c}\beta_{mk^{\prime}}^{c} + \xi\rho^{d}\! \sum\limits_{l^{\prime}=1} ^ {L}\!\! \eta_{l^{\prime}}^{d}\!\!\!\sum\limits_{m\in\mathcal{M}_{k}}^{}\!\!\! \gamma_{mk}^{c}\beta_{ml^{\prime}}^{d}+ \xi N_{0}\!\!\!\sum\limits_{m\in\mathcal{M}_{k}}^{}\!\!\!\gamma_{mk}^{c}\!+\! (1\!-\!\xi^{2})\rho^{c}\eta_{k}^{c}\!\!\sum\limits_{m\in\mathcal{M}_{k}}^{}\!\!\! \gamma_{mk}^{c^{2}}}{\!\!\!\!\!\!\!\!\!\!\!\!\!\!\!\!\!\!\!\!\!\!\!\!\!\!\!\!\!\!\!\!\!\!\!\!\!\!\!\!\!\!\!\!\!\!\!\!\!\!\!\!\!\!\!\!\!\!\!\!+\!\rho^{c}\!\sum\limits_{k^{\prime}\neq k} ^ {K}\!\! \eta_{k^{\prime}}^{c}\!\!\!\sum\limits_{m\in\mathcal{M}_{k}}^{}\!\!\! \left(\!\gamma_{mk}^{c}\frac{\sqrt{\mu_{k^{\prime}}^{c}}\beta_{mk^{\prime}}^{c}}{\sqrt{\mu_{k}^{c}}\beta_{mk}^{c}}\!\right)^{2}\!\!\!\left|\boldsymbol{\omega}_{k}^{H}\boldsymbol{\omega}_{k^{\prime}}\right|^{2}\!\!+\!\rho^{d}\! \sum\limits_{l^{\prime}=1} ^ {L}\!\! \eta_{l^{\prime}}^{d}\!\!\!\sum\limits_{m\in\mathcal{M}_{k}}^{}\!\! \!\left(\!\!\gamma_{mk}^{c}\frac{\sqrt{\mu_{l^{\prime}}^{d}\rho_{p}^{d}}\beta_{ml^{\prime}}^{d}}{\sqrt{\mu_{k}^{c}\rho_{p}^{c}}\beta_{mk}^{c}}\!\right)^{2}\!\!\!\left|\boldsymbol{\omega}_{k}^{H}\boldsymbol{\theta}_{l^{\prime}}\right|^{2}}}\!\!\right)\!\!.
 \end{equation}
\end{theorem}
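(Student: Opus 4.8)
The plan is to replicate the \emph{use-and-then-forget} (UatF) derivation of Theorem~1, now carried out with the LMMSE estimates $\hat{h}_{mk}^{c^{*}}$ in place of the true channels and with the pre-log factor $\varsigma=(T-\tau)/T$ accounting for the training overhead. First I would form the CPU statistic for CFUE $k$ by combining the quantized observations \eqref{eqn:LADC} with the estimates of its serving APs, $r_{k}^{c}=\sum_{m\in\mathcal{M}_{k}}\hat{h}_{mk}^{c^{*}}y_{m}^{c}$, and substitute \eqref{eqn:LADC} to isolate the coefficient of $s_{k}^{c}$ as the desired term $\mathrm{DS}_{k}=\xi\sqrt{\rho^{c}\eta_{k}^{c}}\sum_{m\in\mathcal{M}_{k}}\hat{h}_{mk}^{c^{*}}h_{mk}^{c}$, collecting the remainder into interference from the other CFUEs ($k'\neq k$), interference from the DUEs, the scaled additive noise, and the quantization term $\sum_{m\in\mathcal{M}_{k}}\hat{h}_{mk}^{c^{*}}q_{m}$. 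Exactly as in \eqref{eqn:PCSI_UatF1}--\eqref{eqn:PCSI_UatF}, I would write $r_{k}^{c}=\E\{\mathrm{DS}_{k}\}s_{k}^{c}+\mathrm{BU}_{k}s_{k}^{c}+\mathcal{I}_{k}^{c}$, note that these pieces are mutually uncorrelated for a fixed set of estimates, bound the non-Gaussian ones by the worst-case-Gaussian argument, and conclude $R_{k}^{\mathrm{CFUE}_{\mathrm{ip}}}=\varsigma\log_{2}\!\big(1+|\E\{\mathrm{DS}_{k}\}|^{2}/(\mathrm{Var}(\mathrm{BU}_{k})+\mathrm{Var}(\mathcal{I}_{k}^{c}))\big)$.

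The computation then rests on a short list of facts about the estimates. Since $\hat{h}_{mk}^{c}=\lambda_{mk}^{c}y_{p,mk}^{c}$ is zero-mean Gaussian with $\E\{|\hat{h}_{mk}^{c}|^{2}\}=\gamma_{mk}^{c}$, LMMSE orthogonality gives $\E\{\hat{h}_{mk}^{c^{*}}h_{mk}^{c}\}=\gamma_{mk}^{c}$, the error $\tilde{h}_{mk}^{c}=h_{mk}^{c}-\hat{h}_{mk}^{c}$ is independent of $\hat{h}_{mk}^{c}$ with variance $\beta_{mk}^{c}-\gamma_{mk}^{c}$, and $\E\{|\hat{h}_{mk}^{c}|^{4}\}=2(\gamma_{mk}^{c})^{2}$. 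Crucially, if CFUE $k'$ (resp. DUE $l'$) is assigned a pilot orthogonal to $\boldsymbol{\omega}_{k}$ its channel is independent of $\hat{h}_{mk}^{c}$, whereas if it shares $\boldsymbol{\omega}_{k}$ then projecting \eqref{eqn:Rxpilot} onto $\boldsymbol{\omega}_{k}$ yields $\E\{\hat{h}_{mk}^{c^{*}}h_{mk'}^{c}\}=\tfrac{\sqrt{\mu_{k'}^{c}}\beta_{mk'}^{c}}{\sqrt{\mu_{k}^{c}}\beta_{mk}^{c}}\gamma_{mk}^{c}$ and $\E\{\hat{h}_{mk}^{c^{*}}h_{ml'}^{d}\}=\tfrac{\sqrt{\mu_{l'}^{d}\rho_{p}^{d}}\beta_{ml'}^{d}}{\sqrt{\mu_{k}^{c}\rho_{p}^{c}}\beta_{mk}^{c}}\gamma_{mk}^{c}$, the extra $\sqrt{\rho_{p}^{d}/\rho_{p}^{c}}$ reflecting the different pilot powers of DUEs and CFUEs; these are exactly the coefficients appearing inside the pilot-contamination terms of \eqref{eqn:IPCSI_apx}.

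With these in hand the term-by-term evaluation is routine Gaussian fourth-moment algebra: $\E\{\mathrm{DS}_{k}\}=\xi\sqrt{\rho^{c}\eta_{k}^{c}}\sum_{m\in\mathcal{M}_{k}}\gamma_{mk}^{c}$ gives the numerator; each interferer contributes a \emph{non-coherent} piece proportional to $\gamma_{mk}^{c}\beta_{mk'}^{c}$ or $\gamma_{mk}^{c}\beta_{ml'}^{d}$ (always present) plus a \emph{coherent} piece, present only when its pilot coincides with $\boldsymbol{\omega}_{k}$, built from the proportionality constants above and carrying the indicators $|\boldsymbol{\omega}_{k}^{H}\boldsymbol{\omega}_{k'}|^{2}$ or $|\boldsymbol{\omega}_{k}^{H}\boldsymbol{\theta}_{l'}|^{2}$; the additive noise contributes $N_{0}\sum_{m\in\mathcal{M}_{k}}\gamma_{mk}^{c}$; and the quantization term, via $\mathrm{Var}(\cdot)=\sum_{m\in\mathcal{M}_{k}}\E\{|\hat{h}_{mk}^{c}|^{2}Q_{m}\}$ with $Q_{m}$ from \eqref{eqn:Qvariance}, reproduces $(1-\xi)\xi$-weighted copies of exactly the same CFUE/DUE/noise structures together with a residual self-contribution in $\sum_{m\in\mathcal{M}_{k}}(\gamma_{mk}^{c})^{2}$. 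Summing, the $\xi^{2}$-weighted contributions from $\mathrm{BU}_{k}$ and the interference and the $(1-\xi)\xi$-weighted ones from the quantization noise coalesce through $\xi^{2}+(1-\xi)\xi=\xi$ into the $\xi\rho^{c}\sum_{k'}\eta_{k'}^{c}\sum\gamma_{mk}^{c}\beta_{mk'}^{c}$, $\xi\rho^{d}\sum_{l'}\eta_{l'}^{d}\sum\gamma_{mk}^{c}\beta_{ml'}^{d}$ and $\xi N_{0}\sum\gamma_{mk}^{c}$ terms of \eqref{eqn:IPCSI_apx}, the residual self-contributions assemble into the $(1-\xi^{2})\rho^{c}\eta_{k}^{c}\sum(\gamma_{mk}^{c})^{2}$ term (the factor $(1-\xi^{2})$ rather than $(1-\xi)$ reflecting that quantization now acts in both the training and the data phases), and the coherent pilot-contamination terms survive as written.

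The main obstacle is the bookkeeping around pilot contamination and quantization, not any single nontrivial identity. One must (i) for every co-pilot interferer cleanly split the coherent beamforming contribution from the residual non-coherent one using the proportionality relations of the second step, being careful to get the DUE constant right including the $\sqrt{\rho_{p}^{d}/\rho_{p}^{c}}$ factor; and (ii) expand $\E\{|\hat{h}_{mk}^{c}|^{2}Q_{m}\}$ exhaustively, distinguishing the $k'=k$, co-pilot, and orthogonal-pilot cases and invoking $\E\{|\hat{h}_{mk}^{c}|^{4}\}=2(\gamma_{mk}^{c})^{2}$, and then verify that the $\xi^{2}$- and $(1-\xi)\xi$-weighted pieces recombine exactly into the single-$\xi$ coefficients and the $(1-\xi^{2})$ self-term of \eqref{eqn:IPCSI_apx}. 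Everything else is the same Gaussian algebra already used for Theorems~1 and~2.
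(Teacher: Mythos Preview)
Your proposal is correct and follows exactly the approach the paper itself indicates: apply the UatF bound with the LMMSE estimates $\hat{h}_{mk}^{c}$ in place of the true channels and repeat the term-by-term Gaussian moment computations of Appendix~A, picking up the pilot-contamination cross-terms via the proportionality relations you identified. The paper gives only a one-line sketch (``similar to that of perfect CSI''), so your write-up is in fact considerably more detailed than the original; the only quibble is that your parenthetical interpretation of the $(1-\xi^{2})$ coefficient as ``quantization acting in both phases'' is heuristic rather than structural---the factor simply falls out of the algebra once $\xi^{2}$- and $(1-\xi)\xi$-weighted self-terms are collected---but this does not affect the validity of the derivation.
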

\begin{proof}[Sketch of Proof]
By applying the UatF technique and following the approach similar to that of perfect CSI and using the estimated channels, the achievable data rate is derived.
\end{proof}
In order to derive the achievable data rate of the $l$th DUE receiver we write $\boldsymbol{g}_{ll}^{d} = \hat{\boldsymbol{g}}_{ll}^{d} +  \boldsymbol{\varepsilon}_{ll}^{d},$  where $\boldsymbol{\varepsilon}_{ll}^{d}$ is the LMMSE estimation error and it is independent from the estimated channel. Thus, the combined signal using MRC and imperfect CSI is given by
\begin{equation}\label{eqn:IPCSI_MRCd2d}
\begin{split}
  r_{l}^{d} &= \hat{\boldsymbol{g}}_{ll}^{d^{H}}\boldsymbol{y}_{l}^{d} = \underbrace{\sqrt{\rho^{d}\eta_{l}^{d}}\left\|\hat{\boldsymbol{g}}_{ll}^{d}\right\|^{2}}_{\text{DS\textsubscript{$l$}}}s_{l}^{d} \!+\! \underbrace{\sqrt{\!\rho^{c}}\sum\limits_{k =1}^{K}\!\!\sqrt{\eta_{k}^{c}}\hat{\boldsymbol{g}}_{ll}^{d^{H}}\boldsymbol{g}_{lk}^{c}s_{k}^{c}}_{\text{ICFUE\textsubscript{$l$}}} + \underbrace{\sqrt{\rho^{d}}\sum\limits_{l^{\prime} \neq l}^{L}\!\!\sqrt{\eta_{l^{\prime}}^{d}}\hat{\boldsymbol{g}}_{ll}^{d^{H}}\boldsymbol{g}_{ll^{\prime}}^{d}s_{l^{\prime}}^{d}}_{\text{IDUE\textsubscript{$l$}}}\\
  &+ \underbrace{\hat{\boldsymbol{g}}_{ll}^{d^{H}}\sqrt{\!\rho^{d}\eta_{l}^{d}}\boldsymbol{\varepsilon}_{ll}^{d}s_{l}^{d}}_{\text{TEE\textsubscript{$l$}: Total Estimation Error}} + \underbrace{\hat{\boldsymbol{g}}_{ll}^{d^{H}}\boldsymbol{n}_{l}^{d}}_{\text{TN\textsubscript{$l$}}}.
  \end{split}
\end{equation}
Note that $\mathbb{E}\!\left\lbrace \boldsymbol{\varepsilon}_{ll}^{d}\boldsymbol{\varepsilon}_{ll}^{d^{H}} \right\rbrace = (\psi_{ll}^{d} - \gamma_{ll}^{d})\boldsymbol{I}_{N\times N}$, and therefore, by treating the interfering terms in \eqref{eqn:IPCSI_MRCd2d} as an equivalent Gaussian noise, the ergodic rate for the $l$th DUE is obtained similar to \eqref{eqn:PCSI_d2d1}.
 
 \begin{theorem} The closed form achievable data rate for $l$th DUE with imperfect CSI is given by
 \begin{equation}\label{eqn:IPCSI_apxd2d}
     \!\!\tilde{R}_{l}^{\text{DUE\textsubscript{ip}}} =\!\varsigma \log_{2}\!\!\left(\!\!1\! +\! \dfrac{\rho^{d}\eta_{l}^{d}\gamma_{ll}^{d}(N-1)}{\rho^{c}\!\sum\limits_{k =1}^{K}\!\eta_{k}^{c}\psi_{lk}^{c} \!+ \!\rho^{d}\!\sum\limits_{l^{\prime} \neq l}^{L}\!\eta_{l^{\prime}}^{d}\psi_{ll^{\prime}}^{d}\!\!+\!\rho^{d}\eta_{l}^{d}\!\left(\psi_{ll}^{d}\! -\! \gamma_{ll}^{d}\right) \!+\! N_{0}}\!\!\right)\!\!.
 \end{equation}
\end{theorem}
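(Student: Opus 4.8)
The plan is to carry out, with imperfect CSI, the same argument that yields Theorem~2 (whose proof is in Appendix~B), using the orthogonal decomposition $\boldsymbol{g}_{ll}^{d}=\hat{\boldsymbol{g}}_{ll}^{d}+\boldsymbol{\varepsilon}_{ll}^{d}$ already set up before \eqref{eqn:IPCSI_MRCd2d}. First I would start from the MRC statistic \eqref{eqn:IPCSI_MRCd2d} and verify that its five components --- the desired term $\mathrm{DS}_{l}$, the interference terms $\mathrm{ICFUE}_{l}$ and $\mathrm{IDUE}_{l}$, the estimation-error term $\mathrm{TEE}_{l}$, and the noise term $\mathrm{TN}_{l}$ --- are mutually uncorrelated: this uses the independence and zero mean of the data symbols $\{s_{k}^{c},s_{l'}^{d}\}$ across users, the zero mean of $\boldsymbol{n}_{l}^{d}$, and the LMMSE orthogonality $\boldsymbol{\varepsilon}_{ll}^{d}\perp\hat{\boldsymbol{g}}_{ll}^{d}$ with $\E\{\boldsymbol{\varepsilon}_{ll}^{d}\}=\boldsymbol{0}$ and $\E\{\boldsymbol{\varepsilon}_{ll}^{d}\boldsymbol{\varepsilon}_{ll}^{d^{H}}\}=(\psi_{ll}^{d}-\gamma_{ll}^{d})\boldsymbol{I}_{N}$. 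Treating the aggregate interference-plus-error-plus-noise as worst-case Gaussian --- exactly the step the text invokes to pass from \eqref{eqn:PCSI_MRCd2d} to \eqref{eqn:PCSI_d2d1} --- the achievable rate becomes $\tilde{R}_{l}^{\mathrm{DUE_{ip}}}=\varsigma\,\E\{\log_{2}(1+\mathrm{SINR}_{l})\}$, the expectation being over all channel realizations, with
\begin{equation*}
\mathrm{SINR}_{l}=\dfrac{\rho^{d}\eta_{l}^{d}\lVert\hat{\boldsymbol{g}}_{ll}^{d}\rVert^{4}}{\rho^{c}\sum_{k=1}^{K}\eta_{k}^{c}\bigl|\hat{\boldsymbol{g}}_{ll}^{d^{H}}\boldsymbol{g}_{lk}^{c}\bigr|^{2}+\rho^{d}\sum_{l'\neq l}^{L}\eta_{l'}^{d}\bigl|\hat{\boldsymbol{g}}_{ll}^{d^{H}}\boldsymbol{g}_{ll'}^{d}\bigr|^{2}+\rho^{d}\eta_{l}^{d}\bigl|\hat{\boldsymbol{g}}_{ll}^{d^{H}}\boldsymbol{\varepsilon}_{ll}^{d}\bigr|^{2}+N_{0}\lVert\hat{\boldsymbol{g}}_{ll}^{d}\rVert^{2}},
\end{equation*}
in full analogy with \eqref{eqn:PCSI_d2d1} but with the extra $\mathrm{TEE}_{l}$ contribution in the denominator.

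Next I would invoke Jensen's inequality \eqref{eqn:jensen} with $x$ set to the reciprocal of $\mathrm{SINR}_{l}$, so that $\tilde{R}_{l}^{\mathrm{DUE_{ip}}}\geq\varsigma\log_{2}\!\bigl(1+1/\E\{x\}\bigr)$, and evaluate $\E\{x\}$ by the tower rule, conditioning first on $\hat{\boldsymbol{g}}_{ll}^{d}$. Using $\boldsymbol{g}_{lk}^{c}\sim\mathcal{CN}(\boldsymbol{0},\psi_{lk}^{c}\boldsymbol{I}_{N})$ and $\boldsymbol{g}_{ll'}^{d}\sim\mathcal{CN}(\boldsymbol{0},\psi_{ll'}^{d}\boldsymbol{I}_{N})$, treated as independent of the own-channel estimate $\hat{\boldsymbol{g}}_{ll}^{d}$, together with $\boldsymbol{\varepsilon}_{ll}^{d}\perp\hat{\boldsymbol{g}}_{ll}^{d}$, one gets $\E\{\lvert\hat{\boldsymbol{g}}_{ll}^{d^{H}}\boldsymbol{g}_{lk}^{c}\rvert^{2}\mid\hat{\boldsymbol{g}}_{ll}^{d}\}=\psi_{lk}^{c}\lVert\hat{\boldsymbol{g}}_{ll}^{d}\rVert^{2}$, $\E\{\lvert\hat{\boldsymbol{g}}_{ll}^{d^{H}}\boldsymbol{g}_{ll'}^{d}\rvert^{2}\mid\hat{\boldsymbol{g}}_{ll}^{d}\}=\psi_{ll'}^{d}\lVert\hat{\boldsymbol{g}}_{ll}^{d}\rVert^{2}$ and $\E\{\lvert\hat{\boldsymbol{g}}_{ll}^{d^{H}}\boldsymbol{\varepsilon}_{ll}^{d}\rvert^{2}\mid\hat{\boldsymbol{g}}_{ll}^{d}\}=(\psi_{ll}^{d}-\gamma_{ll}^{d})\lVert\hat{\boldsymbol{g}}_{ll}^{d}\rVert^{2}$. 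Hence every term in the conditional mean of the denominator carries exactly one factor $\lVert\hat{\boldsymbol{g}}_{ll}^{d}\rVert^{2}$, which cancels against $\lVert\hat{\boldsymbol{g}}_{ll}^{d}\rVert^{4}$ in the numerator, so that $\E\{x\}$ equals $\E\{\lVert\hat{\boldsymbol{g}}_{ll}^{d}\rVert^{-2}\}$ times $\bigl(\rho^{c}\sum_{k}\eta_{k}^{c}\psi_{lk}^{c}+\rho^{d}\sum_{l'\neq l}\eta_{l'}^{d}\psi_{ll'}^{d}+\rho^{d}\eta_{l}^{d}(\psi_{ll}^{d}-\gamma_{ll}^{d})+N_{0}\bigr)/(\rho^{d}\eta_{l}^{d})$. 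Since $\hat{\boldsymbol{g}}_{ll}^{d}\sim\mathcal{CN}(\boldsymbol{0},\gamma_{ll}^{d}\boldsymbol{I}_{N})$ --- from \eqref{eqn:ChEstd2d} and the stated $\E\{\hat{\boldsymbol{g}}_{ll}^{d}\hat{\boldsymbol{g}}_{ll}^{d^{H}}\}=\gamma_{ll}^{d}\boldsymbol{I}_{N}$, with i.i.d.\ entries across antennas --- the squared norm $\lVert\hat{\boldsymbol{g}}_{ll}^{d}\rVert^{2}$ is Gamma distributed with shape $N$ and scale $\gamma_{ll}^{d}$, whence $\E\{\lVert\hat{\boldsymbol{g}}_{ll}^{d}\rVert^{-2}\}=\bigl(\gamma_{ll}^{d}(N-1)\bigr)^{-1}$ provided $N\geq 2$. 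Taking the reciprocal of $\E\{x\}$ yields precisely the SINR inside the logarithm of \eqref{eqn:IPCSI_apxd2d}, and the pre-log factor $\varsigma$ finishes the argument.

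The step I expect to be the main obstacle is the moment computation above, specifically the treatment of pilot reuse. Because $\hat{\boldsymbol{g}}_{ll}^{d}$ is built from $\boldsymbol{Y}_{p,l}^{d}\boldsymbol{\theta}_{l}$ in \eqref{eqn:Rxpilotd2d}, it is in general statistically correlated with every interfering channel whose user shares DUE~$l$'s pilot; retaining those correlations would add to the denominator a coherent, $N$-scaling pilot-contamination contribution --- the D2D counterpart of the $\lvert\boldsymbol{\omega}_{k}^{H}\boldsymbol{\omega}_{k'}\rvert^{2}$ and $\lvert\boldsymbol{\omega}_{k}^{H}\boldsymbol{\theta}_{l'}\rvert^{2}$ terms present in the CFUE result \eqref{eqn:IPCSI_apx}. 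Matching the clean closed form of \eqref{eqn:IPCSI_apxd2d} therefore requires treating the interfering channels as independent of $\hat{\boldsymbol{g}}_{ll}^{d}$ at the D2D receiver --- i.e.\ neglecting that coherent component, whose principal effect on the useful power is anyway already folded into $\gamma_{ll}^{d}$ through the estimation quality in \eqref{eqn:ChEstd2d} --- and this modelling choice should be stated explicitly. A secondary point of care is the inverse second moment of a complex Gaussian vector, which is finite only for $N\geq 2$; this is exactly why the hypothesis $N\geq 2$ appears in the statement.
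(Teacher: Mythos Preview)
Your proposal is correct and follows exactly the route the paper sketches: apply Jensen's inequality \eqref{eqn:jensen} to the ergodic rate built from \eqref{eqn:IPCSI_MRCd2d}, then repeat the Appendix~B computation with $\hat{\boldsymbol{g}}_{ll}^{d}\sim\mathcal{CN}(\boldsymbol{0},\gamma_{ll}^{d}\boldsymbol{I}_{N})$ in place of $\boldsymbol{g}_{ll}^{d}$ and with the extra estimation-error term contributing $(\psi_{ll}^{d}-\gamma_{ll}^{d})\lVert\hat{\boldsymbol{g}}_{ll}^{d}\rVert^{2}$ to the denominator. Your caveat about pilot reuse is well placed: the paper's sketch, like Appendix~B, tacitly treats the projected interferers as conditionally $\mathcal{CN}(0,\psi)$ given the combiner, so the coherent pilot-contamination component that would otherwise arise (the D2D analogue of the $\lvert\boldsymbol{\omega}_{k}^{H}\boldsymbol{\omega}_{k'}\rvert^{2}$ terms in \eqref{eqn:IPCSI_apx}) is absent from \eqref{eqn:IPCSI_apxd2d} by modelling choice rather than by derivation --- you are simply making explicit an assumption the paper leaves implicit.
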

\begin{proof}[Sketch of Proof]
By applying \eqref{eqn:jensen} to the ergodic rate obtained from \eqref{eqn:IPCSI_MRCd2d}  
and following a similar approach as that in the case of perfect CSI, achievable data rate is derived (see Appendix B).
\end{proof}
\section{Pilot assignment strategies}\label{sec4PilotandPower}
In this section two pilot assignment strategies are presented to control the pilot contamination effect in the training phase and improve the channel estimation quality. \textcolor{black}{To this end, here, the results of previous section are used.}

As it can be seen from the channel estimates given in \eqref{eqn:ChEst} and \eqref{eqn:ChEstd2d}, sharing pilots between CFUEs and DUEs introduces the second term in the denominator of \eqref{eqn:ChEst} and \eqref{eqn:ChEstd2d}, expressing the pilot contamination between DUEs on CFUEs, which degrades the channel estimation quality. To circumvent this drawback, we first remove this coupling by assigning different sets of orthogonal pilots to CFUEs and DUEs, and then we employ established pilot assignment techniques to allocate the pilots among the users. Although in this way the frequency of reusing a certain pilot among CFUEs or DUEs may increase, it permits us to apply modified version of proven pilot assignment techniques, such as greedy \cite{ngo2017cell} and graph coloring-based \cite{zhu2015graph} pilot assignments, to improve the system performance. It is worth to mention that this decoupling will also simplify the rate expressions, e.g. terms that include $\boldsymbol{\omega}_{k}^{H}\boldsymbol{\theta}_{l}$ will be removed. Besides, it is a rational approach from practical viewpoint to assign pilots taken from different sets for CFUEs and DUEs.
\subsection{CFUE Pilot Assignment}
Therefore, for a total of $\tau$ orthogonal pilots we consider $\tau_{d} = \max\left\lbrace\lfloor\frac{L}{L+K}\tau\rfloor,1\right\rbrace$ of them for DUEs and the remaining $\tau_{c} = \tau-\tau_{d}$ for CFUEs, so that $\boldsymbol{\omega}_{k}^{H}\boldsymbol{\theta}_{l}=0,\ \forall k, l$. Next, we resort to the greedy approach for assigning $\tau_{c}$ orthogonal pilots among $K$ CFUEs \cite{ngo2017cell}. The greedy-based CFUE pilot allocation (GCPA) is given in Algorithm 1. This algorithm starts with an initial random pilot allocation; then, the SE of the CFUEs are computed and the user with the minimum data rate is selected and allocated with a pilot that minimizes the pilot contamination term resulting from other CFUEs. Then, these steps are repeated for the newly assigned pilots for a limited number of iterations\footnote{Although after $T$ iterations the algorithm determines pilot sequences assigned for $K$ CFUEs, its convergence to an optimal solution or a solution that guarantees a performance improvement is not granted \cite{ngo2017cell}. In fact, in this algorithm when a pilot is allocated to the user with minimum SE in an iteration, This new pilot allocation may cause even larger interference to some other users and therefore reduce the minimum SE or sum SE of the system. However, after around $T=5$ iterations on average it can bring some improvements to the performance of system \cite{ngo2017cell}. This overall performance improvement is also observed from Fig.~\ref{fig:pilt} in the numerical results.}. Besides the greedy approach, recently a better performing graph coloring (GC) based approach  has been presented \cite{liu2020graph}. Such GC-based strategy can be applied in our context to the CFUEs following the same procedure reported in  \cite{liu2020graph};  we will show its performance in the numerical results and omit providing further details for the sake of brevity.
\begin{algorithm}[t]
\small
		\caption{Greedy-based CFUE pilot allocation (GCPA)} \label{Algorithm:GCPA}
		\textbf{Input}: Large-scale fading coefficients $\beta_{mk}^{c}, \forall \{m,k\}$, set of available orthogonal pilots $\boldsymbol{\Omega} = \{\boldsymbol{\omega}_{1},...,\boldsymbol{\omega}_{\tau_{c}}\}$, number of iterations $T$, iteration index $t=1$.
		\begin{itemize}
			\item[\textbf{I}.] \emph{Iteration} $t$:
			\begin{itemize}
				\item[I.1.] Use \eqref{eqn:IPCSI_apx} and find the following
				\begin{equation*}
				    \hat{k} = \text{arg}~\underset{k}{\min}\ \ R_{k}^{\text{CFUE\textsubscript{ip}}}.
				\end{equation*}
				\item[I.2.] Choose $\omega_{\hat{k}}$ from the set of CFUE pilots that minimizes following term
				\begin{equation*}
                  \boldsymbol{\omega}_{\hat{k}} = \text{arg}~\underset{\boldsymbol{\pi}_{\hat{k}}\in\boldsymbol{\Omega}}{\min}\ \ \sum\limits_{m\in\mathcal{M}_{\hat{k}}}^{}\sum\limits_{k^{\prime} \neq \hat{k}}^{K}\mu_{k^{\prime}}^{c}\beta_{mk^{\prime}}^{c}\left|\boldsymbol{\pi}_{\hat{k}}^{H}\boldsymbol{\omega}_{k^{\prime}}\right|^{2}
				\end{equation*}
			\end{itemize}
			\item[\textbf{II}.] \textit{If} $t=T$ stop. Otherwise $t=t+1$ and go to \textbf{I}.
		\end{itemize}
		\textbf{Output}: The assigned pilots $\boldsymbol{\omega}_{k} \forall k.$
\end{algorithm}
\subsection{D2D Pilot Assignment}
Here, we propose a modified version of graph coloring (GC) algorithm \cite{zhu2015graph,xu2016pilot} for allocating $\tau_{d}$ orthogonal pilots among $L$ DUEs. By using GC-based pilot allocation the potential interference due to reusing pilots between $l$th and $l^{\prime}$th DUE transmitters at their desired receiver is denoted by $\varphi_{ll^{\prime}}$ and is defined as
\begin{equation}\label{eqn:GCcriterion}
    \varphi_{ll^{\prime}} =
    \begin{cases}
    0, &\text{if}\ l = l^{\prime}\\
    \frac{\psi_{ll^{\prime}}^{d}}{\psi_{ll}^{d}} + \frac{\psi_{l^{\prime}l}^{d}}{\psi_{l^{\prime}l^{\prime}}^{d}}.\ \ &\text{if}\ l\neq l^{\prime}
    \end{cases}
\end{equation}
Accordingly, a large $\varphi_{ll^{\prime}}$ infers a strong interference at the receivers of $l$th and $l^{\prime}$th D2D pairs by the other D2D pair's transmitter. The Modified GC-based DUE pilot assignment (MGCDPA) algorithm, which is given in Algorithm 2\footnote{Note that, Algorithm 1 calculates the rate of all users in every iteration, determines the user with the lowest data rate and from among all the available pilots, selects the one that minimizes the pilot contamination term in the received training signal of that user. This procedure is performed for a predefined number of iterations. Algorithm 2, on the other hand, goes through exactly $L$ iterations, i.e. the number of users, which marks the first clear difference of the two algorithms. Plus, Algorithm 2 computes a level of interference between each pair of pilot transmitting users at their corresponding receivers based on \eqref{eqn:GCcriterion} and stores the values in a matrix, then using this matrix in each iteration the most interfering user is identified and a pilot from the available pilots is allocated in a way that distinct pilots are firstly allocated to the most interfering users if available, otherwise, they are assigned such that the usage frequency of pilots become almost the same while minimizing the pilot contamination.}, attempts to allocate the pilots such that the users with the same pilot experience a low value of $\varphi_{ll^{\prime}}$\footnote{Since for a particular user Algorithm 2 only runs once to assign a pilot and no refinements are performed after assigning a pilot to that user \cite{zhu2015graph}, there is no convergence issues for this algorithm. So, all D2D users are assigned with a pilot only when the algorithm has run for $\ell = L$ iterations.}. 
\begin{algorithm}[t]
\small
		\caption{Modified GC-based DUE pilot assignment} \label{Algorithm:GCDPA}
		\textbf{Input}: Interference strengths $\varphi_{ll^{\prime}}, \forall \{l,l^{\prime}\}$, set of available orthogonal pilots $\boldsymbol{\Theta} = \{\boldsymbol{\theta}_{1},...,\boldsymbol{\theta}_{\tau_{d}}\}$, set of all the transmitters of D2D pairs $\mathcal{L}$, set of users that are assigned with pilots $\mathcal{U}=\emptyset$ which is empty initially, iteration index $\ell=1$.
		\begin{itemize}
			\item[\textbf{I}.] \emph{Iteration} $\ell$:
			\begin{itemize}
				\item[I.1.] Among all the D2D transmitters which are not assigned with pilot find the one that experiences or causes the largest interference
				\begin{equation*}
				    \hat{l} = \text{arg}~\underset{l^{\prime}\in\mathcal{L}\setminus\mathcal{U}}{\max}\ \ \sum\limits_{l\in\mathcal{L}}\varphi_{ll^{\prime}}.
				\end{equation*}
				\item[I.2.] From the set of available pilots, i.e. $\Theta$, select the one that minimizes interference to the users with the same pilot sequence, 
				\begin{equation*}
				\begin{split}
                  \boldsymbol{\theta}_{\hat{l}} &= \text{arg}~\underset{\boldsymbol{\pi}_{\hat{l}}\in\boldsymbol{\Theta}}{\min}\ \ \sum\limits_{l\in\mathcal{U}}\varphi_{l\hat{l}}\left|\boldsymbol{\theta}_{l}^{H}\boldsymbol{\pi}_{\hat{l}}\right|^{2},\\
                  \mathcal{U} &= \mathcal{U}\cup\hat{l}.
                  \end{split}
				\end{equation*}
			\end{itemize}
			\item[\textbf{II}.] \textit{If} $\ell=L$ stop. Otherwise $\ell=\ell+1$ and go to \textbf{I}.
		\end{itemize}
		\textbf{Output}: The assigned pilots $\boldsymbol{\theta}_{l} \forall l$.
\end{algorithm}
\subsection{Complexity of Algorithms 1 and 2}
     Step I$.1$ in Algorithm 1 has a complexity of $\mathcal{O}\left(\mathcal{A}(M,L,K)\right)$ where $\mathcal{A}(M,L,K)=MK\log K + (L+K)\sum\limits_{k=1}^{K}\left|\mathcal{M}_{k}\right|$. It is because we first need to sort the large-scale fading vectors of size $K\times 1$ at each AP with complexity of $\mathcal{O}(K\log K)$ to perform user-centric strategy \cite{cormen2009introduction}. Since there are $M$ such vectors, the overall complexity becomes $\mathcal{O}(MK\log K)$. The second term in the complexity expression of step I$.1$, comes from the computational complexity of achievable data rate for all $K$ users. The complexity of computing other parts in this step are not larger than the two discussed parts. Also, step I$.2$, has a complexity of $\mathcal{O}(\tau_{c}K|\mathcal{M}_{\hat{k}}|)$. Therefore, the overall complexity of GCPA algorithm over $T$ iterations becomes
     $\mathcal{O}\left(T\left(\mathcal{A}(M,L,K)+\tau_{c}K|\mathcal{M}_{\hat{k}}|\right)\right)$.
     
     For Algorithm 2, in $\ell$th iteration, step I.1 requires a complexity of $\mathcal{O}\left(L\times (L-\ell)\right)$ which comes from $\left|\mathcal{L}\setminus\mathcal{U}\right|=L-\ell$ summations of complexity $\mathcal{O}(L)$. Step I.2 requires a complexity of $\mathcal{O}\left(L\times \ell\right)$ which is the cost of $\left|\mathcal{U}\right|=\ell$ summations of complexity $\mathcal{O}(L)$. Therefore. the overall complexity in $\ell$th iteration would be $\mathcal{O}\left(L^{2}\right)$ and since the algorithm runs for $L$ iterations, the over all complexity is $\mathcal{O}\left(L^3\right)$.

\section{Power Allocation} \label{Sec:powerallocation}
\textcolor{black}{Mutual interference between users in the data transmission phase is another important factor that degrades the SE of the system. Therefore it is important to} consider transmit power allocation to further improve the system performance. The following two optimization problems are considered to this end:
\begin{itemize}
    \item \textbf{M}ax \textbf{S}um \textbf{R}ate of \textbf{C}FUEs subject to \textbf{Q}uality of services for \textbf{D}UEs (MSRCQD),
    \item \textbf{W}eighted \textbf{M}ax \textbf{P}roduct of SINRs of \textbf{C}FUEs and \textbf{D}UEs (WMPCD).
\end{itemize}
In MSRCQD the sum data rate of the CFUEs are maximized while DUEs are constrained to have larger data rates than a predetermined value. In WMPCD, the objective is to maximize the weighted product of SINRs of CFUEs and DUEs. This utility function improves the overall performance of the system while also ensuring a degree of fairness between the users, so that all the users are served with a non-zero data rate \cite{bjornson2017massive}. MSRCQD may lead to solutions that no power is allocated to some cell-free users \cite{bjornson2017massive}. On the other hand, in WMPCD not only all users are guaranteed to be allocated with a non-zero power but also since the objective is the product of the SINR of all users, a fair performance for all users is provided. Moreover, in the case of WMPCD we have introduced weights to prioritize one type of users over the others.

\subsection{MSRCQD Optimization Problem}
The MSRCQD optimization problem is formulated as follows
 \begin{equation}\label{power1}
  \mathcal{P}_{1}:
    \begin{cases}
      \begin{aligned}
        &\underset{\{\eta_{k}^{c}\geq0,\ \eta_{l}^{d}\geq0\}_{k,l}}{\text{{\textit{\emph{maximize}}}}}
                             && \sum\limits_{k=1}^{K}R_{k}^{\text{CFUE\textsubscript{ip}}}\\
        &\text{{\textit{\emph{subject to}}}} && \tilde{R}_{l}^{\text{DUE\textsubscript{ip}}}\!\geq\! R_{l,\min},  \ l = 1,2,...,L,\\
        &                    && \eta_{k}^{c}\leq 1, \ \ \  \eta_{l}^{d}\leq 1,\ \ \ \ k = 1,...,K, \ l = 1,...,L.
      \end{aligned}
    \end{cases}
 \end{equation}
 $\mathcal{P}_{1}$ is a non-convex and NP-hard problem to solve optimally. So, for solving this optimization  problem we present two different solutions, namely successive convex optimization (SCA) and geometric programming where the former leads to an iterative solution while the latter relies on the high SINR approximation that can be solved using available GP solvers. \textcolor{black}{While SCA can achieve KKT optimality conditions of $\mathcal{P}_1$ using an \textit{iterative} algorithm, one can find a \textit{suboptimal} solution using a \textit{non-iterative} and \textit{lower-complexity} GP method. However, as it is observed in the numerical results at higher SINRs, the performance gap of GP and SCA is negligible.}
 
 \medskip
 
 {\bf Successive convex optimization.}
       Here, successive lower-bound maximization procedure is used to solve $\mathcal{P}_{1}$. To see this, assume that $A_{k}\left(\eta_{k}^{c}\right)$ and $C_{l}\left(\eta_{l}^{d}\right)$ are the numerator of the SINR of the $k$th CFUE and the $l$th DUE, respectively. Also, $B_{k}\left(\eta^{c},\eta^{d}\right)$ and $D_{l}\left(\eta^{c},\eta^{d}\right)$ are the corresponding denominators, and $\eta^{c} = \{\eta_{1}^{c},...,\eta_{K}^{c}\},\ \eta^{d} = \{\eta_{1}^{d},...,\eta_{L}^{d}\}$. The quantities $A_k(\cdot)$, $B_k(\cdot, \cdot)$, $C_l(\cdot)$, $D_l(\cdot, \cdot)$ are all linear functions of the variable to be optimized $\bbeta$. Consider the generic minimum rate constraint
 $\tilde{R}_{l}^{\text{DUE\textsubscript{ip}}}\!\geq\! R_{l,\min}$. With basic algebra, this constraint can be reformulated as
\begin{equation}
C_{l}\left(\eta_{l}^{d}\right) \geq \left(2^{R_{l,\min}/\varsigma}-1\right) D_{l}\left(\eta^{c},\eta^{d}\right) \; ,
\end{equation}
which shows that the constraints in $\mathcal{P}_1$ are linear. The non-convexity of $\mathcal{P}_{1}$ is thus due to the objective function only. This function, neglecting the irrelevant constant $\varsigma$, can be written as:
\begin{equation}
\sum\limits_{k=1}^{K} \log_{2} \left(1 + \frac{A_{k}\left(\eta_{k}^{c}\right)}{B_{k}\left(\eta^{c},\eta^{d}\right)}\right)=
\underbrace{\sum\limits_{k=1}^{K} \log_{2} \left(A_{k}\left(\eta_{k}^{c}\right) + B_{k}\left(\eta^{c},\eta^{d}\right)\right)}_{g_1(\bbeta)} -
\underbrace{\sum\limits_{k=1}^{K} \log_{2} \left(B_{k}\left(\eta^{c},\eta^{d}\right)\right)}_{g_2(\bbeta)}.
\end{equation}
The functions $g_1(\bbeta)$ and $g_2(\bbeta)$ are both concave. Recall now that any concave function is upper-bounded by its first Taylor expansion around any given point $\bbeta_0$, i.e. we have
\begin{equation}
g_2(\bbeta) \leq g_2(\bbeta_0) + \nabla^T_{\bbeta}g_2(\bbeta)|_{\bbeta=\bbeta_0}(\bbeta -\bbeta_0) \; ,
\end{equation}
with $\nabla_{\bbeta}g_2(\bbeta)|_{\bbeta=\bbeta_0}$ the gradient of the function $g_2(\cdot)$ with respect to $\bbeta$ and computed for $\bbeta=\bbeta_0$.
Accordingly, the objective function of $\mathcal{P}_1$ can be lower-bounded as
\begin{equation}
\sum\limits_{k=1}^{K}R_{k}^{\text{CFUE\textsubscript{ip}}} \geq g_1(\bbeta) - g_2(\bbeta_0) - \nabla^T_{\bbeta}g_2(\bbeta)|_{\bbeta=\bbeta_0}(\bbeta -\bbeta_0)
\triangleq G(\bbeta, \bbeta_0) \; .
\end{equation}
It is easy to realize that the lower-bounding function $G(\bbeta, \bbeta_0)$ is a concave function, and that for $\bbeta=\bbeta_0$ the bound holds with equality. Otherwise stated, properties  \textbf{P1} -- \textbf{P3} \cite{buzzi2019user} hold and the successive lower-bound maximization strategy can be applied.
Summing up, the proposed procedure works as follows.
\begin{itemize}
	\item[1.] Set $i=0$ and choose any feasible point $\bbeta_0$.
	\item[2.] Solve the following convex optimization problem with any standard numerical procedure (e.g., fmincon routine)
	\begin{equation}\label{poweri}
	\mathcal{O}_{i}:
	\begin{cases}
	\begin{aligned}
	&\underset{\{\eta_{k}^{c}\geq0,\ \eta_{l}^{d}\geq0\}_{k,l}}{\text{{\textit{\emph{maximize}}}}}
	&&   G(\bbeta,\bbeta_i) \\
	&\text{{\textit{\emph{subject to}}}} && C_{l}\left(\eta_{l}^{d}\right) \geq \left(2^{R_{l,\min}/\varsigma}-1\right) D_{l}\left(\eta^{c},\eta^{d}\right),  \ l = 1,2,...,L,\\
	&                    && \eta_{k}^{c}\leq 1, \ \ \  \eta_{l}^{d}\leq 1,\ \ \ \ k = 1,...,K, \ l = 1,...,L.
	\end{aligned}
	\end{cases}
	\end{equation}
	Let $\bbeta'$ denote the solution to problem $\mathcal{O}_i$.
	\item[3.] Set $i=i+1$ and $\bbeta_i=\bbeta'$.
	\item[4.] Repeat steps 2-3 until convergence or maximum allowed number of iterations, i.e. $N_{\text{max}}$, has been reached.
\end{itemize}
Based on the theory discussed in \cite[Section V, Subsection A]{buzzi2019user}, the following can be stated:
\begin{theorem}
	After each repetition of steps 2-3, the sum-rate value, i.e. the objective of problem $\mathcal{P}_1$ is not decreased, and the resulting sequence of values converges. At the convergence, the found power allocation fulfills the KKT first-order optimality conditions of problem $\mathcal{P}_1$.
\end{theorem}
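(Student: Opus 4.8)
The plan is to recognize $\mathcal{P}_1$ as an instance of the successive lower-bound maximization (minorize--maximize) framework and to invoke the convergence machinery of \cite{buzzi2019user}. First I would record the three defining properties \textbf{P1}--\textbf{P3} of the surrogate $G(\bbeta,\bbeta_0)$ relative to the true objective $f(\bbeta)\triangleq g_1(\bbeta)-g_2(\bbeta)$ on the feasible set: \textbf{P1} (global minorization) $G(\bbeta,\bbeta_0)\le f(\bbeta)$ for all feasible $\bbeta$, which is exactly the first-order concavity inequality for $g_2$ already displayed in the text; \textbf{P2} (tightness) $G(\bbeta_0,\bbeta_0)=f(\bbeta_0)$, since the Taylor remainder vanishes at the expansion point; and \textbf{P3} (gradient consistency) $\nabla_{\bbeta}G(\bbeta,\bbeta_0)\big|_{\bbeta=\bbeta_0}=\nabla_{\bbeta}f(\bbeta)\big|_{\bbeta=\bbeta_0}$, which holds because the only $\bbeta$-dependence of $G$ beyond $g_1$ is the affine term whose gradient is $-\nabla_{\bbeta}g_2(\bbeta_0)$, matching the gradient of $-g_2$ at $\bbeta_0$. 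I would also note that $G(\cdot,\bbeta_0)$ is concave (sum of the concave $g_1$ and an affine function) and that the feasible set of $\mathcal{O}_i$ is the same closed convex polytope as in $\mathcal{P}_1$, so each subproblem is a genuine convex program; the algebraic fact that $A_k,B_k,C_l,D_l$ are affine in $\bbeta$ and hence $g_1,g_2$ are concave is routine and already asserted, so I would not belabour it.

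Monotonicity of the objective along the iterates then follows in one line: since $\bbeta_i$ is feasible for $\mathcal{O}_i$ and $\bbeta_{i+1}$ maximizes $G(\cdot,\bbeta_i)$ over that set,
\[
f(\bbeta_{i+1})\;\overset{\textbf{P1}}{\ge}\;G(\bbeta_{i+1},\bbeta_i)\;\ge\;G(\bbeta_i,\bbeta_i)\;\overset{\textbf{P2}}{=}\;f(\bbeta_i),
\]
so the sum rate is nondecreasing. Under the standing assumption that the QoS constraints are feasible, the feasible set is a nonempty compact subset of $[0,1]^{K+L}$ and $f$ is continuous on it, hence bounded above; a bounded nondecreasing sequence converges, which establishes convergence of the objective values.

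For the KKT claim I would argue at a limit point. By compactness of $[0,1]^{K+L}$ the iterate sequence has a convergent subsequence $\bbeta_{i_j}\to\bbeta^\star$; by continuity $\bbeta^\star$ is feasible, and by the monotone convergence of the objective together with \textbf{P1}--\textbf{P2} one checks that $\bbeta^\star$ is a fixed point of the update map, i.e.\ $\bbeta^\star$ maximizes $G(\cdot,\bbeta^\star)$ over the feasible polytope. Because all constraints of $\mathcal{O}$ are affine, the linearity constraint qualification holds, so $\bbeta^\star$ satisfies the KKT conditions of the convex surrogate with some multipliers $\{\lambda_l\}$ and box multipliers. Invoking \textbf{P3}, the stationarity condition for $G(\cdot,\bbeta^\star)$ at $\bbeta^\star$ reads identically to the stationarity condition for $f$ at $\bbeta^\star$, while primal feasibility, dual feasibility and complementary slackness are unchanged because the constraint functions of $\mathcal{O}$ and $\mathcal{P}_1$ coincide; hence $\bbeta^\star$ fulfills the first-order KKT conditions of $\mathcal{P}_1$.

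I expect the main obstacle to be this last limit-point step: one must either restrict the conclusion to limit points of the generated sequence (as in \cite{buzzi2019user}) or add a mild regularity condition to promote subsequential to full convergence, and one must have a constraint qualification in force so that ``optimal for the surrogate'' genuinely implies ``KKT of the surrogate.'' Both are handled cleanly here, since the affine constraints supply the constraint qualification for free and the theorem statement is naturally read as a statement about the limit behaviour of the iterates; the remaining verifications (concavity of $g_1,g_2$, affinity of the constraint reformulation) are mechanical.
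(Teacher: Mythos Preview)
Your proposal is correct and follows exactly the approach the paper takes: verify that the surrogate $G(\bbeta,\bbeta_0)$ satisfies the minorize--maximize properties \textbf{P1}--\textbf{P3} of \cite{buzzi2019user} (global lower bound, tightness at the expansion point, gradient matching), note that the constraints are affine so each $\mathcal{O}_i$ is convex with a constraint qualification, and then invoke the successive lower-bound maximization convergence theory. In fact the paper simply states that \textbf{P1}--\textbf{P3} hold and defers everything to \cite[Section V, Subsection A]{buzzi2019user}, so your explicit monotonicity chain and limit-point KKT argument are more detailed than what the paper itself provides.
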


\medskip

 {\bf Geometric programming.}
     This solution is described in the following theorem.
 \begin{theorem}
  Solution of problem $\mathcal{P}_{1}$ can approximately be efficiently obtained using the following GP problem.
  \begin{equation}\label{power1GP}
  \mathcal{P}^{\prime}_{1}:
    \begin{cases}
      \begin{aligned}
        &\underset{{\substack{\{\eta_{k}^{c}\geq0,\ v_{k}\geq 0 \}_{k}\\ \{\eta_{l}^{d}\geq0\}_{l}}}}{\text{{\textit{\emph{maximize}}}}}
                             && \prod\limits_{k=1}^{K}v_{k}\\
        &\text{{\textit{\emph{subject to}}}} && \frac{v_{k}B_{k}\left(\eta^{c},\eta^{d}\right)}{A_{k}\left(\eta_{k}^{c}\right)}\leq 1,\\
        &                    && \left( 2^{R_{l,\min}/\varsigma}-1\right)\frac{D_{l}\left(\eta^{c},\eta^{d}\right)}{C_{l}\left(\eta_{k}^{c}\right)}\leq 1,\\
        &                    && \eta_{k}^{c}\leq 1, \ \ \  \eta_{l}^{d}\leq 1,\ \ \ \ k = 1,...,K, \ l = 1,...,L.
      \end{aligned}
    \end{cases}
 \end{equation}
 \end{theorem}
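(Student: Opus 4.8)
The plan is to convert the sum-rate objective of $\mathcal{P}_{1}$ into a \emph{product} objective, invoke the high-SINR approximation that produces the word ``approximately'' in the statement, and then recognise the resulting program as a GP after an epigraph-type reformulation with auxiliary variables. First, since $\log_{2}(\cdot)$ is monotonically increasing and $\varsigma>0$ is an irrelevant constant, maximizing $\sum_{k=1}^{K}\varsigma\log_{2}\!\left(1+\frac{A_{k}(\eta_{k}^{c})}{B_{k}(\eta^{c},\eta^{d})}\right)$ is equivalent to maximizing the product $\prod_{k=1}^{K}\left(1+\frac{A_{k}(\eta_{k}^{c})}{B_{k}(\eta^{c},\eta^{d})}\right)$. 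Using the high-SINR regime $1+\mathrm{SINR}_{k}\approx\mathrm{SINR}_{k}$, the objective becomes $\prod_{k=1}^{K}\frac{A_{k}(\eta_{k}^{c})}{B_{k}(\eta^{c},\eta^{d})}$; this is the only approximation in the derivation.

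Next I would introduce auxiliary variables $v_{k}\ge 0$, $k=1,\dots,K$, and impose $v_{k}\le\frac{A_{k}(\eta_{k}^{c})}{B_{k}(\eta^{c},\eta^{d})}$, i.e. $\frac{v_{k}B_{k}(\eta^{c},\eta^{d})}{A_{k}(\eta_{k}^{c})}\le 1$, then maximize $\prod_{k}v_{k}$ over the enlarged variable set. This is equivalent to the product-maximization above because at any optimum each such inequality must be active: if $\frac{v_{k}^{\star}B_{k}^{\star}}{A_{k}^{\star}}<1$ for some $k$, one could increase $v_{k}$ alone (no other constraint involves it), stay feasible, and strictly increase $\prod_{k}v_{k}$, a contradiction; hence $v_{k}^{\star}=A_{k}^{\star}/B_{k}^{\star}$ and the two problems share the same optimal $(\eta^{c},\eta^{d})$. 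The DUE QoS constraint $\tilde{R}_{l}^{\text{DUE\textsubscript{ip}}}\!\ge\! R_{l,\min}$ is rewritten, exactly as in the SCA derivation, as $C_{l}(\eta_{l}^{d})\ge\left(2^{R_{l,\min}/\varsigma}-1\right)D_{l}(\eta^{c},\eta^{d})$, equivalently $\left(2^{R_{l,\min}/\varsigma}-1\right)\frac{D_{l}(\eta^{c},\eta^{d})}{C_{l}(\eta_{l}^{d})}\le 1$, while the power budgets remain $\eta_{k}^{c}\le 1$, $\eta_{l}^{d}\le 1$. The collection of these constraints and the objective $\prod_{k}v_{k}$ is precisely $\mathcal{P}'_{1}$.

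It then remains to verify that $\mathcal{P}'_{1}$ is a genuine GP, so that it is efficiently solvable. Inspecting \eqref{eqn:IPCSI_apx} and \eqref{eqn:IPCSI_apxd2d}, the numerators $A_{k}$ and $C_{l}$ are \emph{monomials} in $(\eta^{c},\eta^{d})$ — each is a single nonnegative coefficient times one power-control variable, all other quantities (the $\gamma$'s, which depend only on the pilot coefficients $\mu$) being constants here — while the denominators $B_{k}$ and $D_{l}$ are \emph{posynomials} (sums of such monomials together with the noise term, a degenerate monomial), all coefficients being nonnegative (in particular $\psi_{ll}^{d}-\gamma_{ll}^{d}\ge 0$). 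Consequently $\frac{v_{k}B_{k}}{A_{k}}$ is a posynomial in $(\eta^{c},\eta^{d},\{v_{k}\})$ — a monomial $v_{k}$ times the posynomial $B_{k}$, divided by the monomial $A_{k}$ — so the first family of constraints are posynomial-$\le 1$ inequalities; the same holds for $\left(2^{R_{l,\min}/\varsigma}-1\right)\frac{D_{l}}{C_{l}}$ and trivially for the box constraints. Finally, maximizing the monomial $\prod_{k}v_{k}$ is equivalent to minimizing the monomial $\prod_{k}v_{k}^{-1}$, so $\mathcal{P}'_{1}$ is a standard-form GP, which becomes convex under the logarithmic change of variables and is therefore solvable efficiently by off-the-shelf interior-point GP solvers. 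The main obstacle is not computational but conceptual: one must justify that the high-SINR replacement $1+\mathrm{SINR}\approx\mathrm{SINR}$ is the sole approximation involved, and that the auxiliary-variable constraints are tight at optimality so that $\mathcal{P}'_{1}$ and the high-SINR-approximated $\mathcal{P}_{1}$ have the same solution; everything else is bookkeeping on the monomial/posynomial structure of the rate expressions.
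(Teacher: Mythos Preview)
Your proposal is correct and follows essentially the same route as the paper's proof: apply the high-SINR approximation to turn the sum-rate objective into a product of SINRs, introduce the epigraph variables $v_{k}$ to obtain the first family of constraints, rewrite the QoS constraint in ratio form, and then verify the GP structure by observing that $A_{k},C_{l}$ are monomials while $B_{k},D_{l}$ are posynomials. Your version is in fact more thorough than the paper's, since you explicitly argue tightness of the auxiliary constraints at optimality and justify the nonnegativity of the coefficients (e.g., $\psi_{ll}^{d}-\gamma_{ll}^{d}\ge 0$), points the paper leaves implicit.
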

 \begin{proof}
 By assuming high SINR approximation for CFUEs, the objective function in $\mathcal{P}_{1}$ after ignoring ``1" in rate expression inside the logarithm in \eqref{eqn:IPCSI_apx} becomes $\sum\limits_{k=1}^{K}R_{k}^{\text{CFUE\textsubscript{ip}}}\approx \varsigma\sum\limits_{k=1}^{K}\log\left(\frac{A_{k}\left(\eta_{k}^{c}\right)}{B_{k}\left(\eta^{c},\eta^{d}\right)}\right) = \varsigma\log\left(\prod\limits_{k=1}^{K}\frac{A_{k}\left(\eta_{k}^{c}\right)}{B_{k}\left(\eta^{c},\eta^{d}\right)}\right) $. Then by removing the constant coefficient $\varsigma$ and ignoring the monotonically increasing function, i.e. the logarithm, the optimizing values of optimization variables will remain unchanged. 
 
 Next, we introduce the auxiliary variable $v_{k}$ such that $\frac{A_{k}\left(\eta_{k}^{c}\right)}{B_{k}\left(\eta^{c},\eta^{d}\right)}\geq v_{k}$ which results the first constraint and the objective of  $\mathcal{P}^{\prime}_{1}$. Since $A_{k}\left(\eta_{k}^{c}\right)$, $C_{l}\left(\eta_{l}^{d}\right)$, and the objective in $\mathcal{P}_{1}^{\prime}$ are monomial and $B_{k}\left(\eta^{c},\eta^{d}\right)$, $D_{l}\left(\eta^{c},\eta^{d}\right)$ are posynomial, the inequality constraints are posynomial, and thus problem \eqref{power1GP} is a GP problem\footnote{We have used MOSEK in CVX to solve GP problems. It exploits the fact that any GP problem can be reformulated in the form of a convex optimization problem. So, it applies interior-point technique that uses Newton's method to find the solution. Hence, in each iteration, by applying Newton's method to the unconstrained convex problem, it moves along the descending direction of the objective function and finally converges to a solution that its gap with the optimal solution can be made arbitrarily small in cost of increased number of Newton iterations \cite[Chapter 11]{boyd2004convex}.}.
 \end{proof}

\subsection{WMPCD Optimization Problem}
The second problem that we investigate is WMPCD which is formulated as $\mathcal{P}_{2}$
 \begin{equation}\label{power2}
  \mathcal{P}_{2}:
    \begin{cases}
      \begin{aligned}
        &\underset{\{\eta_{k}^{c}\geq0,\ \eta_{l}^{d}\geq0\}_{k,l}}{\text{{\textit{\emph{maximize}}}}}
                             && \left(\prod\limits_{k=1}^{K}\frac{A_{k}\left(\eta_{k}^{c}\right)}{B_{k}\left(\eta^{c},\eta^{d}\right)}\right)^{w^{c}}\left(\prod\limits_{l=1}^{L}\frac{C_{l}\left(\eta_{l}^{d}\right)}{D_{l}\left(\eta^{c},\eta^{d}\right)}\right)^{w^{d}}\\
        &\text{{\textit{\emph{subject to}}}} && \eta_{k}^{c}\leq 1, \ k = 1,2,...,K,\\
        &                    && \eta_{l}^{d}\leq 1, \ l = 1,2,...,L.
      \end{aligned}
    \end{cases}
 \end{equation}
 It is worth mentioning that the first and the second terms in the objective function of $\mathcal{P}_{2}$ are the product of the SINRs of CFUEs and DUEs respectively, and $w^{c}\geq 0$, $w^{d}\geq 0$ are the respective weights. Solution of the above optimization problem is addressed in the following theorem.
 \begin{theorem}
  Solution of the optimization problem $\mathcal{P}_{2}$ can be obtained from following GP problem.
  \begin{equation}\label{power2GP}
  \mathcal{P}^{\prime}_{2}:
    \begin{cases}
      \begin{aligned}
        &\underset{{\substack{\{\eta_{k}^{c}\geq0,\ v_{k}\geq 0 \}_{k}\\ \{\eta_{l}^{d}\geq0,\ x_{l}\geq0\}_{l}}} }{\text{{\textit{\emph{maximize}}}}}
                             && \left(\prod\limits_{k=1}^{K}v_{k}\right)^{w^{c}}\left(\prod\limits_{l=1}^{L}x_{l}\right)^{w^{d}}\\
        &\text{{\textit{\emph{subject to}}}} && \frac{v_{k}B_{k}\left(\eta^{c},\eta^{d}\right)}{A_{k}\left(\eta_{k}^{c}\right)}\leq 1,\\
        &                    && \frac{x_{l}D_{l}\left(\eta^{c},\eta^{d}\right)}{C_{l}\left(\eta_{k}^{c}\right)}\leq 1,\\
        &                    && \eta_{k}^{c}\leq 1, \ \ \  \eta_{l}^{d}\leq 1,\ \ \ \ k = 1,...,K, \ l = 1,...,L.
      \end{aligned}
    \end{cases}
 \end{equation}
 \end{theorem}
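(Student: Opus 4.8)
The plan is to reproduce, for WMPCD, the change-of-variables argument already used for $\mathcal{P}_1$, casting $\mathcal{P}_2$ into the standard form of a geometric program. The pleasant feature here is that, unlike in $\mathcal{P}_1$, the objective of $\mathcal{P}_2$ is already a \emph{product} of per-user SINRs rather than a \emph{sum} of logarithms, so no high-SINR approximation is needed and the reformulation is \emph{exact}.

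First I would record the structural facts. Viewing everything as a function of the power-control vector $\bbeta=(\eta^c,\eta^d)$, the numerators $A_k(\eta_k^c)$ and $C_l(\eta_l^d)$ are monomials (each affine in a single coordinate, constant in the others, with positive coefficient), while the denominators $B_k(\eta^c,\eta^d)$ and $D_l(\eta^c,\eta^d)$ are posynomials, as can be read off from \eqref{eqn:IPCSI_apx} and \eqref{eqn:IPCSI_apxd2d}. Next I would introduce auxiliary variables $v_k>0$ and $x_l>0$ and impose
\begin{equation*}
\frac{v_k B_k(\eta^c,\eta^d)}{A_k(\eta_k^c)}\le 1,\qquad \frac{x_l D_l(\eta^c,\eta^d)}{C_l(\eta_l^d)}\le 1,\qquad k=1,\dots,K,\; l=1,\dots,L .
\end{equation*}
Since a monomial times a posynomial is a posynomial, and a posynomial divided by a monomial is again a posynomial, each of these is a legitimate ``posynomial $\le 1$'' GP constraint, as are the box constraints $\eta_k^c\le1$ and $\eta_l^d\le1$. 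Finally, replacing the objective of $\mathcal{P}_2$ by $\big(\prod_{k}v_k\big)^{w^c}\big(\prod_{l}x_l\big)^{w^d}$ --- a monomial in the extended variables, admissible even though $w^c,w^d$ need not be integers --- and recalling that maximizing a monomial is equivalent to minimizing its reciprocal, itself a monomial, we arrive precisely at $\mathcal{P}_2'$, which is therefore a GP and can be handed to a standard solver exactly as $\mathcal{P}_1'$.

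It then remains to verify that the two problems have the same optimal value and the same optimal $(\eta^c,\eta^d)$. For the non-degenerate case $w^c,w^d>0$ I would argue that at any maximizer of $\mathcal{P}_2'$ every auxiliary constraint is active: each $v_k$ (resp.\ $x_l$) occurs only in its own constraint and in the objective, and the objective is strictly increasing in it, so a slack constraint could be tightened by increasing that variable, strictly improving the objective without affecting feasibility --- a contradiction. Hence $v_k=A_k(\eta_k^c)/B_k(\eta^c,\eta^d)$ and $x_l=C_l(\eta_l^d)/D_l(\eta^c,\eta^d)$ at the optimum, and substituting these back turns the objective of $\mathcal{P}_2'$ into that of $\mathcal{P}_2$; since the remaining ($\eta$-only) constraints are identical in the two problems and the all-ones power allocation is feasible, the $(\eta^c,\eta^d)$-part of any solution of $\mathcal{P}_2'$ solves $\mathcal{P}_2$. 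When a weight is zero the corresponding SINRs drop out of both objectives and the claim is immediate.

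The main obstacle is not conceptual but bookkeeping: one must carefully confirm the GP standard-form requirements --- that $A_k,C_l$ are monomials and $B_k,D_l$ posynomials as functions of the optimization variables, that each ratio constraint $A_k/B_k\ge v_k$ is correctly flipped into a posynomial-$\le1$ form, and that the (possibly non-integer) exponents $w^c,w^d\ge0$ are acceptable --- together with the strict-monotonicity/tightness step that pins each auxiliary variable to the exact SINR and thereby makes the reformulation lossless.
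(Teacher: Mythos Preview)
Your proposal is correct and follows essentially the same approach as the paper: introduce auxiliary variables $v_k,x_l$ bounded above by the SINRs, rearrange to posynomial-$\le 1$ constraints, and observe that the objective is a monomial so that $\mathcal{P}_2'$ is a GP. Your write-up is in fact more complete than the paper's own proof, which omits the tightness/monotonicity argument you give to justify that the reformulation is exact rather than merely a relaxation.
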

 \begin{proof}
 For obtaining $\mathcal{P}^{\prime}_{2}$ we first introduce the auxiliary variables $v_{k}$ and $x_{l}$ such that $\frac{A_{k}\left(\eta_{k}^{c}\right)}{B_{k}\left(\eta^{c},\eta^{d}\right)}\geq v_{k}$ and $\frac{C_{l}\left(\eta_{l}^{c}\right)}{D_{l}\left(\eta^{c},\eta^{d}\right)}\geq x_{l}$, $\forall l,\ k$. Then, after rearranging these inequalities, the two first constraints are derived. Similar to the proof of theorem 6, the inequality constraints are posynomial while the objective function is monomial therefore $\mathcal{P}_{2}^{\prime}$ is a GP problem.
 \end{proof}
 \subsection{Complexity of SCA- and GP-based Power Control}
Major complexity of SCA-based power optimization comes from solving the approximated convex problem \eqref{poweri}. Fmincon routine that employs interior-point (IP) method to solve the convex problem is used to solve \eqref{poweri}. A common way in IP is to reformulate an inequality constrained convex optimization problem in the form of an unconstrained convex optimization problem and solve using Newton's method \cite[Chapter 11]{boyd2004convex}. We denote the number of iterations for Fmincon to reach an $\epsilon$-gap solution of \eqref{poweri} by $N$ \cite[Chapter 11, Section 11.5]{boyd2004convex}. The value of $N$ depends on the number of inequality constraints in \eqref{poweri} which is $2L+K$. Also, each Newton step requires a computation of $\mathcal{O}\left(\mathcal{B}(L,M,K)\right)$ where $\mathcal{B}(M,L,K)=MK\log K + (L+K)\left(L^2+K\sum\limits_{k=1}^{K}\left|\mathcal{M}_{k}\right|\right)$. Therefore the total complexity will be $\mathcal{O}\left(N_{\text{max}}N\mathcal{B}(M,L,K)\right)$ where $N_{\text{max}}$ is the maximum number of iterations for the SCA-based algorithm.
 
 For solving GP problems, we have used MOSEK in the CVX which exploits the fact that any GP problem can be reformulated as a convex problem \cite{boyd2004convex} and employs IP method for solving the resulted convex form. Similar to the SCA-based method,  total number of the Newton iterations to reach an $\epsilon$-gap solution of the optimal solution denoted by $N$ \cite[Chapter 11, Section 11.5]{boyd2004convex}. The number of inequality constraints is $2(L+K)$. In addition, the computational complexity of the unconstrained problems of MSRCQD and WMPCD per iteration of IP is $\mathcal{O}\left(\mathcal{B}(M,L,K)\right)$. Thus, the total complexity for GP problems is $\mathcal{O}\left(N\mathcal{B}(M,L,K)\right)$.
 
In the following section, numerical results are presented to assess the system performance of the proposed pilot and power allocation problems.
\section{Numerical Results and Discussions}\label{sec5numerical}
We focus on a simulation scenario with $K=20$ CFUEs, $L=10$ pairs of D2D users, $M=200$ APs, and $b=4$ bits, unless specifically mentioned, which are uniformly and randomly distributed within area of $D=1 \times 1$ [km$^2$]. Moreover, for each pair of D2D users we assume that the transmitter and the receiver are randomly placed within a distance of 10 [m] up to 100 [m] from one another. The areas is wrapped around to avoid boundary effects. To model the large-scale fading coefficients and noise power we the same model and parameters as in \cite{ngo2017cell,masoumi2019performance}. In particular $B=20$ [MHz] and $f=1.9$ [GHz] are system bandwidth and carrier frequency, respectively.
Also, $T = 200$ samples, $\rho^{c} = \rho^{d} = 100\ [\text{mW}]$, $\rho_{p}^{c} = \rho_{p}^{d} = 200\ [\text{mW}]$ and $\tau = 10$.
\begin{figure}[!t]
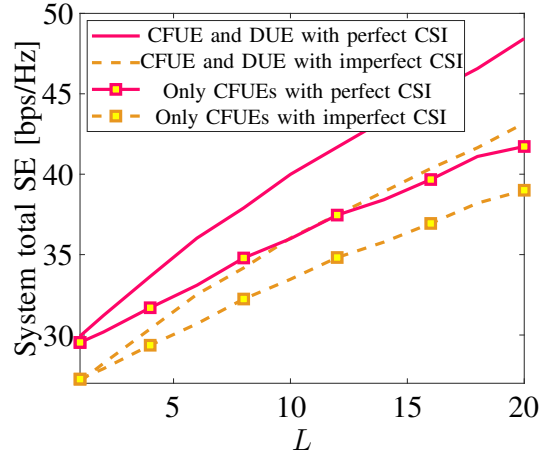

\centering
\psfragfig[scale=.6]{./Figures/FinalResults/EquivalentCF}{\psfrag{L}[][][0.96]{$L$}
\psfrag{5}[][][0.96]{5}
\psfrag{10}[][][0.96]{10}
\psfrag{15}[][][0.96]{15}
\psfrag{20}[][][0.96]{20}
\psfrag{30}[][][0.96]{30}
\psfrag{35}[][][0.96]{35}
\psfrag{40}[][][0.96]{40}
\psfrag{45}[][][0.96]{45}
\psfrag{50}[][][0.96]{50}
\psfrag{AverageSSE}[][][0.96]{System total SE [bps/Hz]}
\psfrag{CFmMIMOwithD2DPCSIAAA}[][][0.68]{\ \ CFUE and DUE with perfect CSI}
\psfrag{CFmMIMOwithD2DIPCSIAAA}[][][0.68]{\ \  CFUE and DUE with imperfect CSI}
\psfrag{EquivalentCFmMIMOPCSIAAA}[][][0.68]{Only CFUEs with perfect CSI}
\psfrag{EquivalentCFmMIMOIPCSIAAA}[][][0.68]{Only CFUEs with imperfect CSI}}
\vspace{-0.2cm}
\caption{{The advantages of considering D2D users in a CF-mMIMO system.}}\label{fig:N1}
\end{figure}

\subsection{Impact of Underlaid D2D Links and of Finite Resolution ADCs}
In order to emphasize the beneficial impact that the activation of D2D links has on the system overall throughput, 
we compare the performance of two systems. The former serves simultaneously  $L$ D2D pairs and $K$ CFUEs, while the latter  \emph{only} supports $K+L$ CFUEs without serving D2D users; both cases have an equal number of served users. In Fig. \ref{fig:N1}, the system total SE, i.e. sum of DUEs and CFUEs SEs, versus $L$ is depicted for perfect and imperfect CSI with full power (FP) allocation, i.e. $\eta_{k}^{c}=\eta_{l}^{d}=1$. The results clearly show the positive impact that the activation of D2D links has with respect to the case in which all the communications flow through the network infrastructure. 

\begin{figure}[!t]
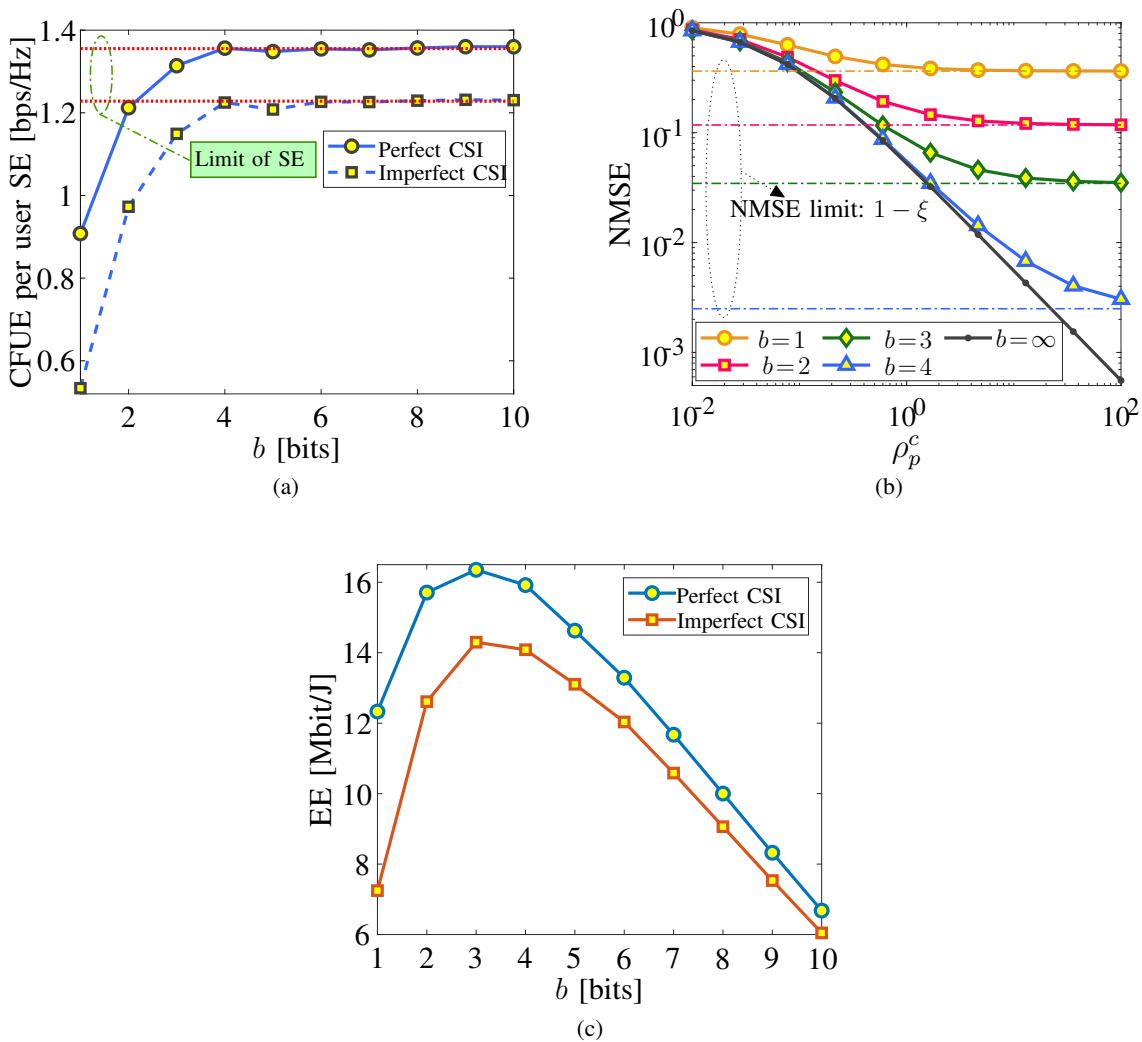

\centering
\subfloat[]{\psfragfig[scale=.59]{./Figures/FinalResults/AvgPu_ADCbitsLimits}{\psfrag{NumberofADCbits}[][][0.96]{$b$ [bits]}
\psfrag{8}[][][0.96]{8}
\psfrag{1}[][][0.96]{1}
\psfrag{2}[][][0.96]{2}
\psfrag{4}[][][0.96]{4}
\psfrag{6}[][][0.96]{6}
\psfrag{10}[][][0.96]{10}
\psfrag{1.4}[][][0.96]{1.4}
\psfrag{1.2}[][][0.96]{1.2}
\psfrag{0.6}[][][0.96]{0.6}
\psfrag{0.8}[][][0.96]{0.8}
\psfrag{CUEsAvgperuserSE}[][][0.96]{CFUE per user SE [bps/Hz]}
\psfrag{LimitOfSE}[][][0.70]{\ Limit of SE}
\psfrag{perfectCSI}[][][0.70]{\ Perfect CSI}
\psfrag{imperfectCSI}[][][0.70]{\ Imperfect CSI}}\label{fig:N3a}}
\hfil
\subfloat[]{\psfragfig[scale=0.6]{./Figures/MajorRevise/nmse_pow}{\psfrag{0}[][][0.8]{0}
   \psfrag{-1}[][][0.8]{-1}
   \psfrag{-2}[][][0.8]{-2}
   \psfrag{-3}[][][0.8]{-3}
   \psfrag{2}[][][0.8]{2}
   \psfrag{10}[][][0.9]{10}
   \psfrag{NMSE}[][][0.96]{NMSE}
   \psfrag{Pilot power}[][][1.05]{$\rho_{p}^{c}$}
   \psfrag{b1bittt}[][][0.8]{$b\!=\!1$}
   \psfrag{b2bittt}[][][0.8]{$\ b\!=\!2$}
   \psfrag{b3bittt}[][][0.8]{$\ b\!=\!3$}
   \psfrag{b4bittt}[][][0.8]{$\ b\!=\!4$}
   \psfrag{binfbtt}[][][0.8]{$b\!=\!\infty$}
   \psfrag{1-xi}[][][0.8]{\ \ \ \ \ \ NMSE limit: $1-\xi$}}\label{fig:N3b}}
   \hfil
  \subfloat[]{\psfragfig[scale=0.6]{./Figures/MajorRevise/EEvsADC}{\psfrag{1}[][][0.96]{1}
   \psfrag{2}[][][0.96]{2}
   \psfrag{3}[][][0.96]{3}
   \psfrag{4}[][][0.96]{4}
   \psfrag{5}[][][0.96]{5}
   \psfrag{6}[][][0.96]{6}
   \psfrag{7}[][][0.96]{7}
   \psfrag{8}[][][0.96]{8}
   \psfrag{9}[][][0.96]{9}
   \psfrag{10}[][][0.96]{10}
   \psfrag{12}[][][0.96]{12}
   \psfrag{14}[][][0.96]{14}
   \psfrag{16}[][][0.96]{16}
   \psfrag{EE}[][][1]{EE [Mbit/J]}
   \psfrag{NumberofADCbits}[][][0.96]{$b$ [bits]}
   \psfrag{perfectCSI}[][][0.7]{Perfect CSI}
   \psfrag{imperfectCSI}[][][0.7]{Imperfect CSI}}\label{fig:N3c}}
\vspace{-0.2cm}
\caption{The impact of low resolution ADCs on the performance of the system with full power allocation. (a) Per user SE of CFUEs versus $b$. (b) NMSE of estimated channel versus transmitted pilot power with orthogonal pilots of length $\tau=30$. (c) EE of CFUEs versus $b$.}\label{fig:N3}
\end{figure}

Next, we study the impact of the number of quantization bits at the ADCs.
From Fig.~\ref{fig:N3a} one can observe that by increasing the number of ADCs' bits the quantization noise is decreased and the performance improves. Figure~\ref{fig:N3b} depicts normalized MSE (NMSE) defined as $\text{NMSE}_{mk}=\dfrac{\text{MSE}_{mk}}{\beta_{mk}^{c}}$. This graph confirms the results in Remark 3 that using low resolution ADC limits the MSE of the estimated channels and using high resolution ADCs reduces this limit. Figure \ref{fig:N3c} studies the effect of low resolution ADCs from the energy efficiency (EE)\footnote{We define $\text{EE}=B\times \text{SSE}/P_T$ where SSE is the sum SE of CFUEs and $P_T = K(\rho_{p}^{c}+\rho^{c}) + M(P_{\text{mix}}+P_{\text{LNA}}+P_{\text{IFA}}+P_{\text{filter}}+P_{\text{AGC}}+P_{\text{syn}} + P_{ADC})$ is the total power consumption \cite{zhang2018mixed}, \cite{cui2005energy}
where $P_{\text{mix}}$, $P_{\text{LNA}}$, $P_{\text{IFA}}$, $P_{\text{filter}}$, $P_{\text{AGC}}$, $P_{\text{syn}}$ and  $P_{\text{ADC}}$ are power consumption due to the mixer, low noise amplifier, intermediate frequency amplifier (IFA), active filters, automatic gain control, frequency synthesizer and the ADC at the access points, respectively. Also $P_{\text{ADC}}=\frac{3 V_{dd}^{2} L_{\min }\left(2 B+f_{c o r}\right)}{10^{-0.1525 b+4.838}}$
where $V_{dd}$ denotes the power
supply of converter, $L_{\min}$ is the minimum
channel length for the given CMOS technology and $f_{cor}$ is the
corner frequency of the $1/f$ noise. Typical values of the mentioned parameters are given in \cite{zhang2018mixed} and \cite{cui2005energy}.} angle. For $b=3$ [bits] highest EE is achieved while for smaller and higher quantization bits lower EE is achieved due to SE degradation and high power consumption, respectively. Results show that using 4-bit ADCs is sufficient. Therefore, we use choose $b=4$ in the remainder of this section.
\subsection{User-centric AP-UE Association}
In previous results all CFUEs were served by all the  APs. Now, we turn our attention to the UC CF-mMIMO case where only a limited number of CFUEs are served by the APs.
\begin{figure}[!t]
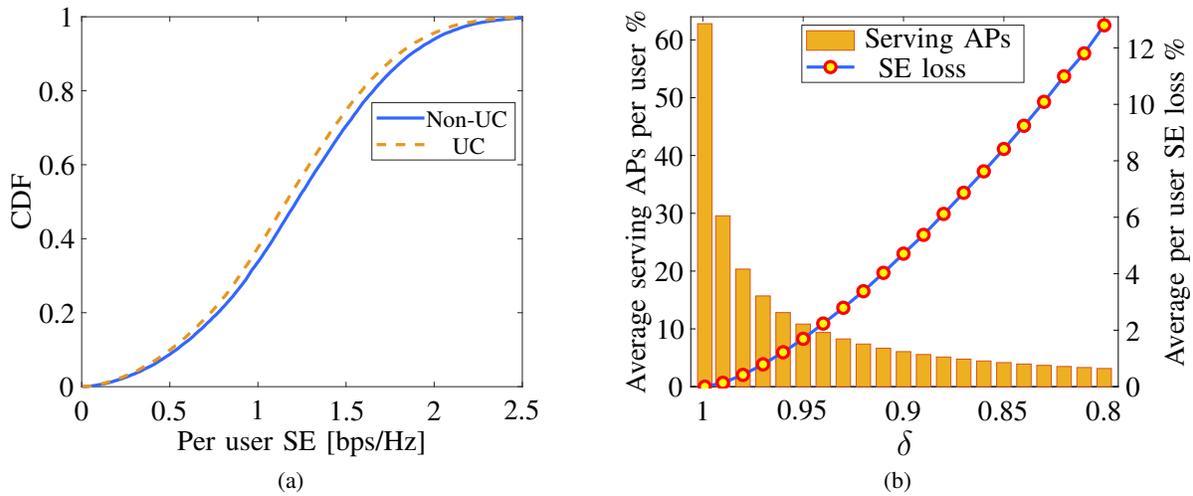

  \centering
  \subfloat[]{\psfragfig[scale=.6]{./Figures/RevisedResults/cdfP_09}{\psfrag{0}[][][0.9]{0}
  \psfrag{0.5}[][][0.9]{0.5}
  \psfrag{1}[][][0.9]{1}
  \psfrag{1.5}[][][0.9]{1.5}
  \psfrag{2.5}[][][0.9]{2.5}
  \psfrag{2}[][][0.9]{2}
  \psfrag{0.2}[][][0.9]{0.2}
  \psfrag{0.4}[][][0.9]{0.4}
  \psfrag{0.6}[][][0.9]{0.6}
  \psfrag{0.8}[][][0.9]{0.8}
  \psfrag{CDF}[][][0.9]{CDF}
  \psfrag{CFAAA}[][][0.74]{Non-UC}
  \psfrag{UCAAA}[][][0.74]{UC}
  \psfrag{PeruserCFUESE}[][][0.9]{Per user SE [bps/Hz]}}\label{fig:uc1A}}
  \hfil
  \subfloat[]{\psfragfig[scale=.6]{./Figures/RevisedResults/ConctdAPsLossPercentageMain}{
  \psfrag{0}[][][0.9]{0}
  \psfrag{10}[][][0.9]{10}
  \psfrag{1}[][][0.9]{1}
  \psfrag{20}[][][0.9]{20}
  \psfrag{30}[][][0.9]{30}
  \psfrag{40}[][][0.9]{40}
  \psfrag{50}[][][0.9]{50}
  \psfrag{60}[][][0.9]{60}
  \psfrag{0.8}[][][0.9]{0.8}
  \psfrag{0.85}[][][0.9]{0.85}
  \psfrag{0.9}[][][0.9]{0.9}
  \psfrag{0.95}[][][0.9]{0.95}
  \psfrag{2}[][][0.9]{2}
  \psfrag{4}[][][0.9]{4}
  \psfrag{6}[][][0.9]{6}
  \psfrag{8}[][][0.9]{8}
  \psfrag{12}[][][0.9]{12}
  \psfrag{Delta}[][][1.1]{$\delta$}
  \psfrag{AveragePercentageOfActiveAPsPU}[][][0.9]{Average serving APs per user \%}
  \psfrag{PerformanceLoss}[][][0.9]{Serving APs}
  \psfrag{ActiveAPs}[][][0.9]{\ \ \ SE loss}
  \psfrag{AveragePUPerformanceLoss}[][][0.9]{Average per user SE loss \%}}\label{fig:uc1B}}
  \caption{Performance of CFUEs with UC CF-mMIMO system, random pilot allocation, and full power transmission. (a) CDF of per user SE for $\delta = 0.9$. (b) Percentage of the APs serving each user and corresponding SE loss for $0.8\leq \delta \leq 0.999$.}\label{fig:ucc1}
  \end{figure}
  
 In Fig.~\ref{fig:ucc1}, performance of the system using UC approach for different values of design parameter $\delta$ is examined. Fig.~\ref{fig:uc1A} depicts the CDF of per user SE where Non-UC denotes the case that users are served by all APs. When UC is employed users experience performance loss; however, this loss is minimal, in particular the 5\%-outage rate loss is negligible, since the out-of-service APs for each user are those that locate farther away from that users which are determined by the design parameter $\delta = 0.9$. Fig.~\ref{fig:uc1B} shows the percentage of the APs that serve each user on average and the average per user SE loss for different $\delta$. From this figure, it is seen that for $\delta = 0.95$ only 12\% of the APs or 24 APs out of 200 APs are involved in serving each user on average and as the SE loss curve shows it causes less than 2\% loss in exchange for this massive reduction in the number of serving APs.
\begin{figure}[!t]
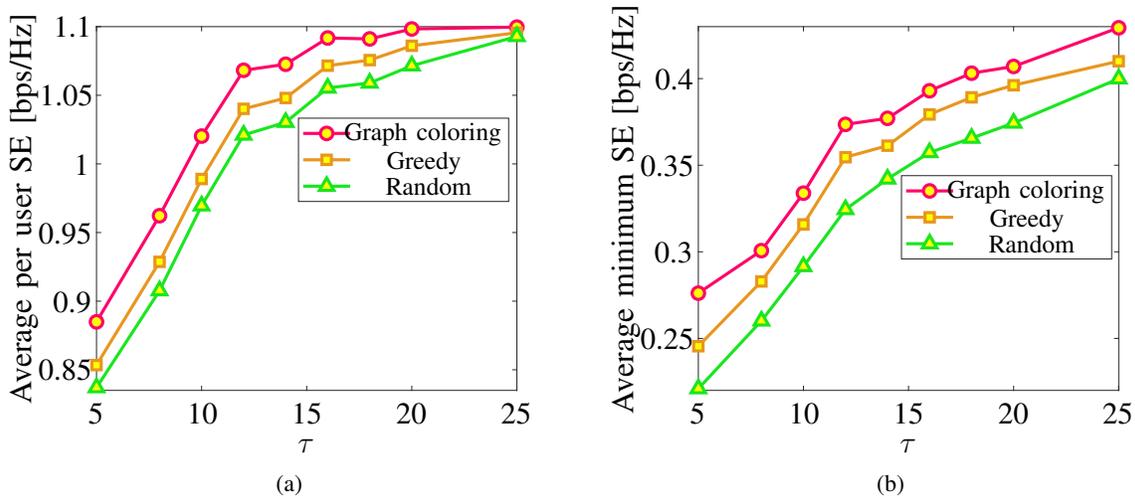

  \centering
  \subfloat[]{\psfragfig[scale=.59]{./Figures/RevisedResults/pCDFpilot}{\psfrag{1}[][][0.96]{1}
  \psfrag{1.1}[][][0.96]{1.1}
  \psfrag{1}[][][0.96]{1}
  \psfrag{1.05}[][][0.96]{1.05}
  \psfrag{0.95}[][][0.96]{0.95}
  \psfrag{0.9}[][][0.96]{0.9}
  \psfrag{0.85}[][][0.96]{0.85}
  \psfrag{5}[][][0.96]{5}
  \psfrag{10}[][][0.96]{10}
  \psfrag{15}[][][0.96]{15}
  \psfrag{20}[][][0.96]{20}
  \psfrag{25}[][][0.96]{25}
  \psfrag{CDF}[][][1]{Average per user SE [bps/Hz]}
  \psfrag{GCFPAAAAA}[][][0.8]{Graph coloring}
  \psfrag{GreedyFPAAA}[][][0.8]{Greedy}
  \psfrag{RandomFPAAA}[][][0.8]{Random}
  \psfrag{Per user CFUE SE}[][][0.96]{$\tau$}}\label{fig:pilta}}
  \hfil
  \subfloat[]{\psfragfig[scale=.59]{./Figures/RevisedResults/mCDFpilot}{\psfrag{0.25}[][][0.96]{0.25}
  \psfrag{0.3}[][][0.96]{0.3}
  \psfrag{0.35}[][][0.96]{0.35}
  \psfrag{0.4}[][][0.96]{0.4}
  \psfrag{5}[][][0.96]{5}
  \psfrag{10}[][][0.96]{10}
  \psfrag{15}[][][0.96]{15}
  \psfrag{20}[][][0.96]{20}
  \psfrag{25}[][][0.96]{25}
  \psfrag{CDF}[][][1]{Average minimum SE [bps/Hz]}
  \psfrag{GCFPAAAAA}[][][0.8]{Graph coloring}
  \psfrag{GreedyFPAAA}[][][0.8]{Greedy}
  \psfrag{RandomFPAAA}[][][0.8]{Random}
  \psfrag{MinuserCFUESE}[][][0.96]{$\tau$}}\label{fig:piltb}}
  \caption{Comparison of different pilot allocation methods for CFUEs. (a) Average per user SE. (b) Average minimum SE.}\label{fig:pilt}
  \end{figure}
  
\subsection{Impact of Pilot Allocation and Power Control Strategies}  
We now examine the performance gains granted by the proposed pilot allocation and power control algorithms. 

Figure \ref{fig:pilta} and Fig.~\ref{fig:piltb} compare the performance of graph coloring, greedy, and random pilot allocations. As shown in both figures, the GC-based pilot allocation outperforms the other two methods. Based on these results, by adopting GC pilot allocation one can reach a performance gain of up to 12\% for minimum SE and 4\% for per user SE in comparison to the greedy method, and gains of up to 25\% for minimum SE and 6\% for per user SE in comparison with random pilot allocation.
\begin{figure}[!t]
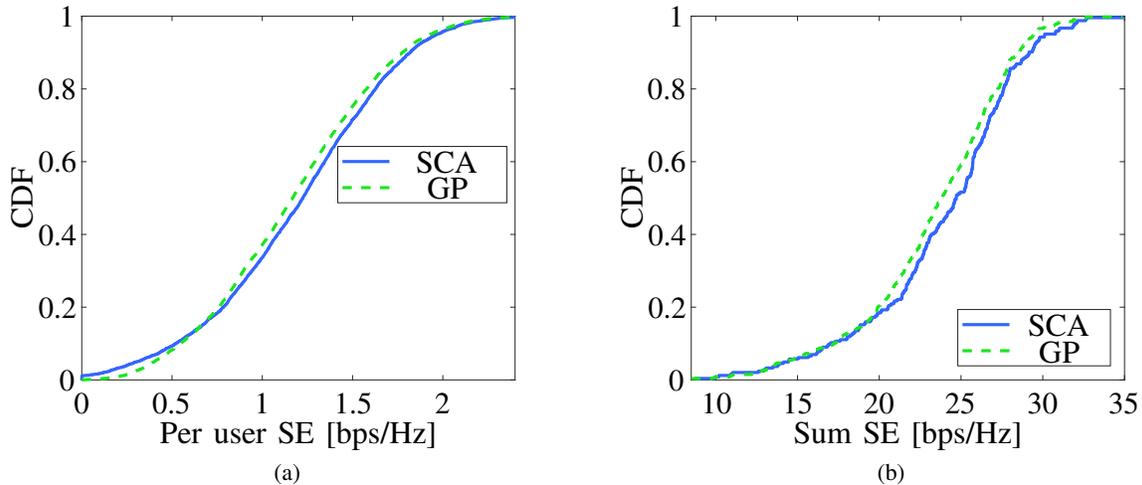

  \centering
  \subfloat[]{\psfragfig[scale=.59]{./Figures/RevisedResults/pGPvSCA}{\psfrag{0}[][][0.96]{0}
  \psfrag{0.5}[][][0.96]{0.5}
  \psfrag{1}[][][0.96]{1}
  \psfrag{1.5}[][][0.96]{1.5}
  \psfrag{2}[][][0.96]{2}
  \psfrag{0.2}[][][0.96]{0.2}
  \psfrag{0.4}[][][0.96]{0.4}
  \psfrag{0.6}[][][0.96]{0.6}
  \psfrag{0.8}[][][0.96]{0.8}
  \psfrag{CDF}[][][1]{CDF}
  \psfrag{GCSCAAA}[][][1]{SCA}
  \psfrag{GCGPAAA}[][][1]{GP}
  \psfrag{PeruserCFUESE}[][][1]{Per user SE [bps/Hz]}
  \psfrag{Per user CFUE SE}[][][0.96]{$\tau$}}\label{fig:gpscaA}}
  \hfil
  \subfloat[]{\psfragfig[scale=.59]{./Figures/RevisedResults/sGPvSCA}{
  \psfrag{0}[][][0.96]{0}
  \psfrag{15}[][][0.96]{15}
  \psfrag{10}[][][0.96]{10}
  \psfrag{35}[][][0.96]{35}
  \psfrag{30}[][][0.96]{30}
  \psfrag{1}[][][0.96]{1}
  \psfrag{25}[][][0.96]{25}
  \psfrag{20}[][][0.96]{20}
  \psfrag{0.2}[][][0.96]{0.2}
  \psfrag{0.4}[][][0.96]{0.4}
  \psfrag{0.6}[][][0.96]{0.6}
  \psfrag{0.8}[][][0.96]{0.8}
  \psfrag{CDF}[][][1]{CDF}
  \psfrag{GCSCAA}[][][1]{SCA}
  \psfrag{GCGPAA}[][][1]{GP}
  \psfrag{SumCFUESE}[][][1]{Sum SE [bps/Hz]}}\label{fig:gpscaB}}
  \caption{Comparison of GP-based and SCA-based power allocation solutions for $\mathcal{P}_{1}$ and $R_{l,\min} = 0.1$ for $l=1,2,...,L$. (a) Per user SE. (b) Sum SE.}\label{fig:gpsca}
  \end{figure}
  
  In Fig.~\ref{fig:gpsca} the two power allocation approaches, i.e. SCA and GP, for solving $\mathcal{P}_{1}$ are compared. As we expected for per user SE in Fig.~\ref{fig:gpscaA} at high SNR regime the performance of the two power allocations overlap while at low SNR, GP has better performance for per user data rates because it approximates the sum of the SEs with the product of SNRs that in turn prevents SNRs of having near-zero values. On the other hand, in terms of the objective function which is the sum SE of CFUEs, as it is evident from Fig.~\ref{fig:gpscaB} SCA outperforms the GP solution.
\begin{figure}[!t]
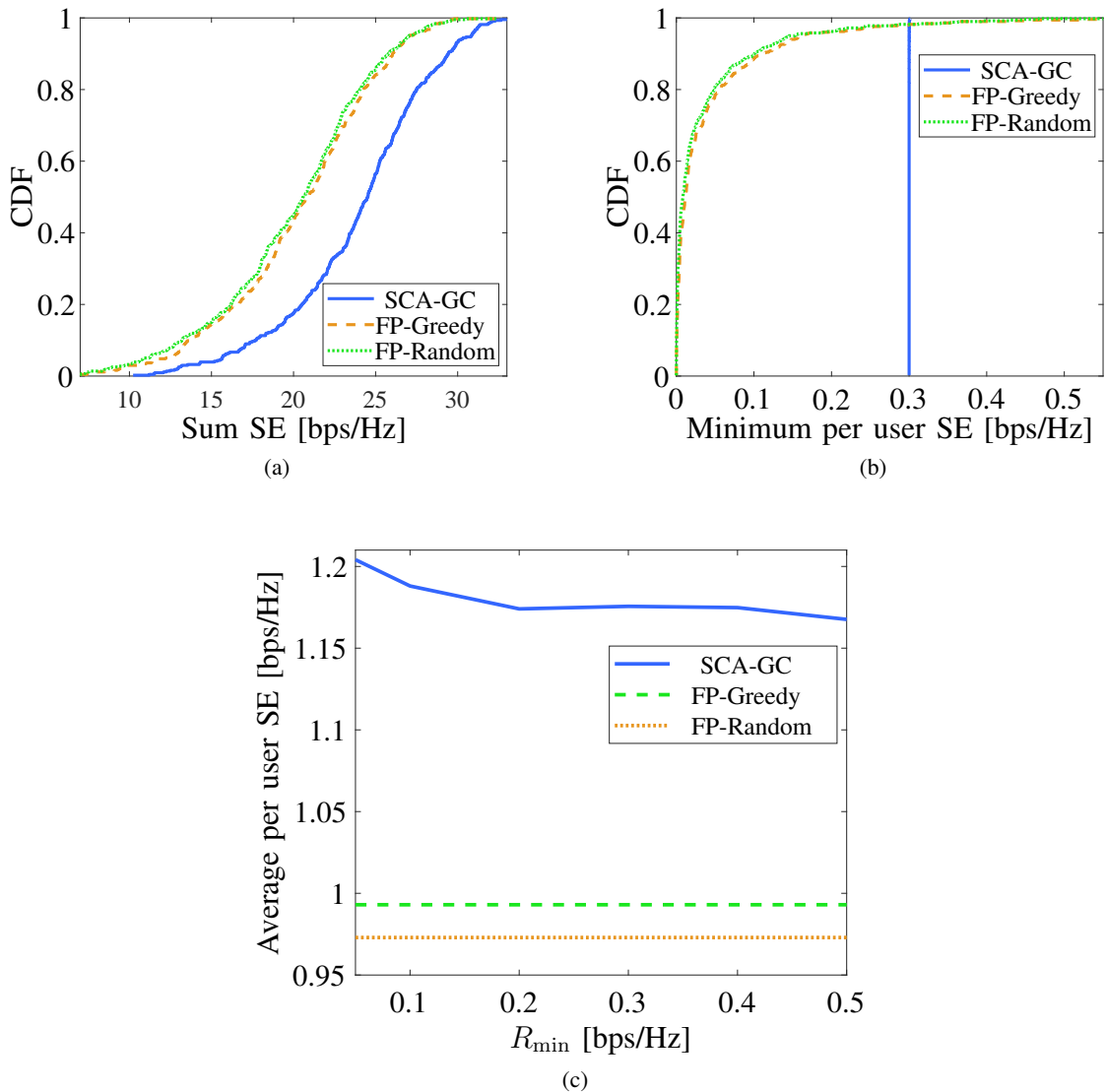

  \centering
  \subfloat[]{\psfragfig[scale=.59]{./Figures/RevisedResults/sCDFp1}{
  \psfrag{0}[][][0.96]{0}
  \psfrag{0.5}[][][0.96]{0.5}
  \psfrag{1}[][][0.96]{1}
  \psfrag{1.5}[][][0.96]{1.5}
  \psfrag{2}[][][0.96]{2}
  \psfrag{0.2}[][][0.96]{0.2}
  \psfrag{0.4}[][][0.96]{0.4}
  \psfrag{0.6}[][][0.96]{0.6}
  \psfrag{0.8}[][][0.96]{0.8}
  \psfrag{CDF}[][][1]{CDF}
  \psfrag{GCSCAAA}[][][0.8]{SCA-GC}
  \psfrag{GreedyFPA}[][][0.8]{FP-Greedy}
  \psfrag{RandomFPA}[][][0.8]{FP-Random}
  \psfrag{Sum CFUE SE}[][][1]{Sum SE [bps/Hz]}}\label{fig:sca1A}}
  \hfil
  \subfloat[]{\psfragfig[scale=.59]{./Figures/RevisedResults/mD2Dp1}{
  \psfrag{0}[][][0.96]{0}
  \psfrag{0.1}[][][0.96]{0.1}
  \psfrag{1}[][][0.96]{1}
  \psfrag{0.5}[][][0.96]{0.5}
  \psfrag{0.3}[][][0.96]{0.3}
  \psfrag{0.2}[][][0.96]{0.2}
  \psfrag{0.4}[][][0.96]{0.4}
  \psfrag{0.6}[][][0.96]{0.6}
  \psfrag{0.8}[][][0.96]{0.8}
  \psfrag{CDF}[][][1]{CDF}
  \psfrag{GCSCAAA}[][][0.8]{SCA-GC}
  \psfrag{GreedyFPA}[][][0.8]{FP-Greedy}
  \psfrag{RandomFPA}[][][0.8]{FP-Random}
  \psfrag{MinD2DSE}[][][1]{Minimum per user SE [bps/Hz]}}\label{fig:sca1B}}
  \hfil
  \subfloat[]{\psfragfig[scale=.7]{./Figures/RevisedResults/pCFvsRMIN}{
  \psfrag{0}[][][0.96]{0}
  \psfrag{0.1}[][][0.96]{0.1}
  \psfrag{1}[][][0.96]{1}
  \psfrag{0.5}[][][0.96]{0.5}
  \psfrag{0.3}[][][0.96]{0.3}
  \psfrag{0.2}[][][0.96]{0.2}
  \psfrag{0.4}[][][0.96]{0.4}
  \psfrag{0.6}[][][0.96]{0.6}
  \psfrag{0.8}[][][0.96]{0.8}
  \psfrag{0.95}[][][0.96]{0.95}
  \psfrag{1.05}[][][0.96]{1.05}
  \psfrag{1.1}[][][0.96]{1.1}
  \psfrag{1.15}[][][0.96]{1.15}
  \psfrag{1.2}[][][0.96]{1.2}
  \psfrag{PURate}[][][1]{Average per user SE [bps/Hz]}
  \psfrag{GCSCAAAA}[][][0.8]{SCA-GC}
  \psfrag{GreedyFPAA}[][][0.8]{FP-Greedy}
  \psfrag{RandomFPAA}[][][0.8]{FP-Random}
  \psfrag{Rmin}[][][1]{$R_{\min}$ [bps/Hz]}}\label{fig:rminP1}}
  \caption{Performance of the system with SCA-based power allocation and GC-based pilot allocation compared to full power transmission and random/greedy-based pilot allocation for $\mathcal{P}_{1}$, and $R_{l,\min} = R_{\min}$, $l=1,2,...,L$. (a) CDF of sum SE for CFUEs with $R_{l,\min} = 0.3$. (b) CDF of minimum SE for DUEs with $R_{l,\min} = 0.3$. (c) Average per user SE of CFUEs versus $R_{\min}$.}\label{fig:sca}
  \end{figure}
  
\textcolor{black}{The combination of full power (FP) transmission with greedy-based and random pilot allocation which are widely used in many prior works \cite{ngo2017cell,nayebi2017precoding,bjornson2019making,mai2018pilot,liu2019tabu,jin2020spectral,hu2019cell,zhu2015graph} are considered as baseline approaches to highlight the performance of the proposed method.} Fig.~\ref{fig:sca1A} shows tremendous improvements in sum data rate of CFUEs. In terms of 5\%-outage, SCA with GC-based pilot allocation achieves the performance gains of 28\% up to 43\% in comparison to FP transmission with greedy-based and random pilot allocation, respectively. Moreover, for DUEs Fig.~\ref{fig:sca1B} verifies that all the users comply with the minimum data rate constraint.
Next, Fig.~\ref{fig:rminP1} depicts average per user SE for CFUEs versus $R_{\min}$. As it is shown, by adopting SCA-based power allocation and GC-based pilot allocation, one can reach at least 17\% better per user SE in comparison to other two trivial power and pilot allocation methods.
  \begin{figure}[t!]
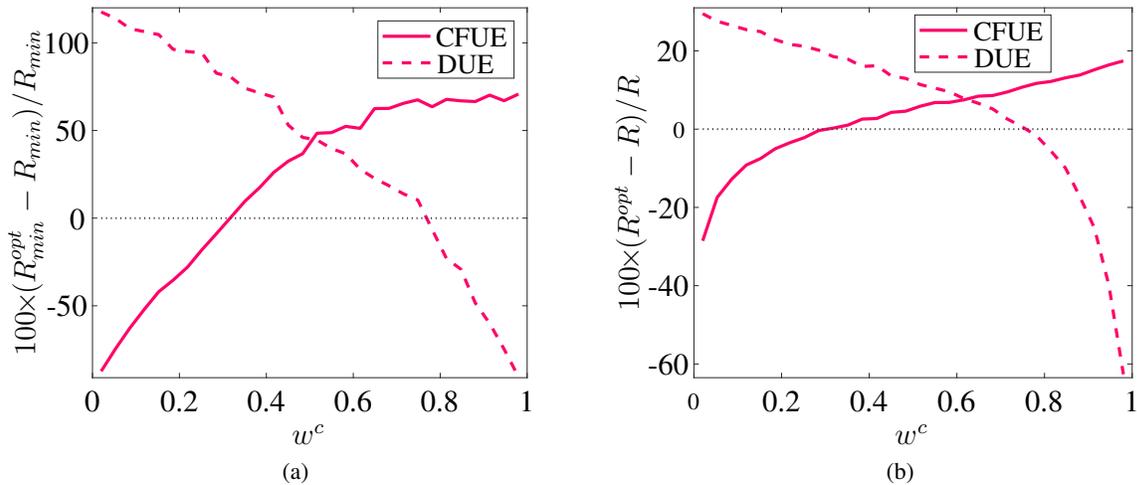

  \centering
  \subfloat[]{\psfragfig[scale=.6]{./Figures/FinalResults/opMUEvsW1}{\psfrag{0}[][][0.96]{0}
  \psfrag{0.6}[][][0.96]{0.6}
  \psfrag{0.2}[][][0.96]{0.2}
  \psfrag{0.8}[][][0.96]{0.8}
  \psfrag{0.4}[][][0.96]{0.4}
  \psfrag{1}[][][0.96]{1}
  \psfrag{-50}[][][0.96]{-50}
  \psfrag{100}[][][0.96]{100}
  \psfrag{50}[][][0.96]{50}
  \psfrag{DifferenceofMinuserrate}[][][0.96]{$100\!\!\times\!\!(R_{min}^{opt}-R_{min})/R_{min}$}
  \psfrag{CFUEA}[][][0.88]{CFUE}
  \psfrag{DUEA}[][][0.87]{DUE}
  \psfrag{w1}[][][0.96]{$w^{c}$}}\label{fig:N7a}}
  \hfil
  \subfloat[]{\psfragfig[scale=.6]{./Figures/FinalResults/opPUEvsW1}{\psfrag{0}[][][0.74]{0}
  \psfrag{0.6}[][][0.96]{0.6}
  \psfrag{0.2}[][][0.96]{0.2}
  \psfrag{0.8}[][][0.96]{0.8}
  \psfrag{0.4}[][][0.96]{0.4}
  \psfrag{1}[][][0.96]{1}
  \psfrag{-60}[][][0.96]{-60}
  \psfrag{-40}[][][0.96]{-40}
  \psfrag{-20}[][][0.96]{-20}
  \psfrag{20}[][][0.96]{20}
  \psfrag{DifferenceofPeruserrate}[][][0.96]{$100\!\!\times\!\!(R^{opt}-R)/R$}
  \psfrag{CFUEA}[][][0.86]{CFUE}
  \psfrag{DUEA}[][][0.87]{DUE}
  \psfrag{w1}[][][0.96]{$w^{c}$}}\label{fig:N7b}}
  \caption{Perfomance of DUEs and CFUEs, without pilot allocation and power control and with pilot allocation and WMPCD power control for different values of weight $w^{c} = 1 - w^{d}$ where $w^{c}\in [0,1]$. (a) minimum SE. (b) Per user SE.}\label{fig:N7}
  \end{figure} 
  
  In Fig.~\ref{fig:N7}, performance of the system with the WMPCD power control for different weights is evaluated. Note that, $R_{min}^{opt}$ and $R^{opt}$ indicate minimum rate among the users and per user rate with WMPCD power control and GC-based pilot assignment, respectively, while those without the superscript $opt$ denote the case of full power and random pilot assignment. When the curves are above the horizontal dot-line, we have performance improvements compared to that of without resource allocation. For smaller $w^{c}$, DUEs have higher priority than the CFUEs, so the objective function is  maximized by improving the DUEs' data rate and cutting down the interference from CFUEs. As a result, less power is dedicated for data transmission of CFUEs. In contrast, when $w^{c}$ gets closer to 1, CFUEs gain higher priority and the behaviour of system is justified in the similar way. Moreover, the performance improvement is much more pronounced for the minimum SE, as depicted in Fig.~\ref{fig:N7a}, which shows near 50\% improvement for both DUEs and CFUEs (for the case $w^{c}\approx 0.5$). Regarding the per user SE, the same performance enhancement for both types of users occurs around $w^{c}\approx 0.6$. Also, as it is observed from these figures for $0.3\leq w^{c}\leq 0.8$ we have improvements for both DUEs and CFUEs, though by varying $w^{c}$ in the given interval one can reach different trade-offs between CFUEs and DUEs.
  

\section{Conclusions}\label{sec6conclusion}
We considered the issue of the coexistence between D2D links and the uplink of a CF-mMIMO system, assuming that finite resolution ADCs were used at the APs. Also, to make the system scalable we assumed that each AP only serves a few of the users, an approach is known as the user-centric. For the cases of both perfect and imperfect CSI at the receivers, closed-form expressions of achievable data rates for both CFUEs and DUEs have been derived. In order to obtain estimates of D2D pairs' and CFUEs' channels,  greedy-based and graph coloring-based algorithms to assign pilot sequences among different users have been proposed, so as to control the resultant pilot contamination. Next, two power allocation problems have been explored. The first one maximizes the CFUEs' sum SE subject to QoS constraints on DUEs data rates; and the second one maximizes the weighted product of SINRs of CFUEs and DUEs. GP theory and SCA were used to solve both problems and the convergence and complexity of proposed algorithms and optimization problems are discussed. Numerical results have shown that the activation of D2D links provides a considerable gain to the network throughput. Also, the performance loss due to finite resolution ADCs can be kept under control by using an adequate level of resolution; in particular, in the considered scenario,  using 4-bit ADCs provided almost the same performance as that of the infinite resolution ADCs' in terms of SE. Results also showed that properly designed pilot allocation and power control schemes can bring remarkable performance improvements to the network throughput. Further work in this area is currently addressing the analysis of the downlink CF-mMIMO system, as well as the design of proper D2D link activation protocols. 

\vspace{-.4cm}
\section*{Appendix A}\label{app:1}
It can be shown that all the terms in $\mathcal{I}_{k}^{c}$ are mutually uncorrelated, therefore
\begin{equation}\label{eqn:DSBUapp1}
\begin{split}
  \mathbb{E}\{\text{DS}_{k}\} &= \xi\sqrt{\eta_{k}^{c}\rho^{c}}\sum\limits_{m\in\mathcal{M}_{k}}^{}\mathbb{E}\left\lbrace|h_{mk}^{c}|^{2}\right\rbrace =  \xi\sqrt{\eta_{k}^{c}\rho^{c}}\sum\limits_{m\in\mathcal{M}_{k}}^{}\beta_{mk}^{c},\\
  \mathbb{E}\{\text{BU}_{k}\} &= \xi^{2}\eta_{k}^{c}\rho^{c}\mathbb{E}\!\left\lbrace\! \left\lvert \sum\limits_{m\in\mathcal{M}_{k}}^{}\!|h_{mk}^{c}|^{2} \!-\!\! \sum\limits_{m\in\mathcal{M}_{k}}^{}\!\!\mathbb{E}\!\left\lbrace|h_{mk}^{c}|^{2}\right\rbrace \right\rvert^{2} \!\right\rbrace\! = \xi^{2}\eta_{k}^{c}\rho^{c}\mathbb{E}\!\!\left\lbrace\! \left\lvert \sum\limits_{m\in\mathcal{M}_{k}}^{}\!\left(|h_{mk}^{c}|^{2}\!\! - \!\beta_{mk}^{c}\right) \right\rvert^{2} \!\right\rbrace.
  \end{split}
\end{equation}
  Since the terms $\left(|h_{mk}^{c}|^{2}\!\! - \!\beta_{mk}^{c}\right)$ are uncorrelated for different $m$, the above equation \eqref{eqn:DSBUapp1} can be simplified as
  \begin{equation}\label{eqn:BUapp1}
  \mathbb{E}\!\{\!\text{BU}_{k}\!\} \!=\! \xi^{2}\eta_{k}^{c}\rho^{c} \!\sum\limits_{m\in\mathcal{M}_{k}}^{}\!\mathbb{E}\!\!\left\lbrace\!\left\lvert|h_{mk}^{c}|^{2}\!\! - \!\beta_{mk}^{c}\right\rvert^{2} \!\right\rbrace \!=\! \xi^{2}\eta_{k}^{c}\rho^{c} \!\sum\limits_{m\in\mathcal{M}_{k}}^{}\!\!\left(\!\mathbb{E}\!\left\lbrace\!|h_{mk}^{c}|^{4} \!\right\rbrace\!\! - \!\beta_{mk}^{c^{2}}\right) \!\overset{(a)}{=}\! \xi^{2}\eta_{k}^{c}\rho^{c} \!\sum\limits_{m\in\mathcal{M}_{k}}^{}\!\beta_{mk}^{c^{2}},
\end{equation}
where $(a)$ is due to $\mathbb{E}\!\left\lbrace\!|h_{mk}^{c}|^{4} \!\right\rbrace\! = 2\beta_{mk}^{c^{2}}$. Also, for computing $\text{Var}(\mathcal{I}_{k}^{c})$ we have
\begin{equation}\label{eqn:varapp1}
\begin{split}
\text{Var}(\mathcal{I}_{k}^{c}) \!=\! \mathbb{E}\!\left\lbrace\! \left\lvert \mathcal{I}_{k}^{c} \right\rvert^{2} \!\right\rbrace \!-\! \underbrace{\left\lvert \mathbb{E}\left\lbrace \mathcal{I}_{k} \right\rbrace \right\rvert^{2}}_{=0} \!=\! \mathbb{E}\!\left\lbrace\! \left\lvert \text{ICFUE\textsubscript{$k$}} \right\rvert^{2} \!\right\rbrace \!+\! \mathbb{E}\!\left\lbrace\! \left\lvert \text{IDUE\textsubscript{$k$}} \right\rvert^{2} \!\right\rbrace\! +\! \mathbb{E}\!\left\lbrace\! \left\lvert \text{TN\textsubscript{$k$}} \right\rvert^{2} \!\right\rbrace\! +\! \mathbb{E}\!\left\lbrace\! \left\lvert \text{QN\textsubscript{$k$}} \right\rvert^{2} \!\right\rbrace.
  \end{split}
\end{equation}
By computing each of the terms in the above equation, we have
\begin{equation}\label{eqn:ICUEapp1}
\begin{split}
\mathbb{E}\left\lbrace \left\lvert \text{ICFUE\textsubscript{$k$}} \right\rvert^{2} \right\rbrace &= \xi^{2}\rho^{c}\! \sum\limits_{k^{\prime}\neq k} ^ {K}\!\! \eta_{k^{\prime}}^{c}\mathbb{E}\!\left\lbrace\! \left\lvert \sum\limits_{m\in\mathcal{M}_{k}}^{}h_{mk}^{c^*}h_{mk^{\prime}}^{c}\right\rvert^{2} \right\rbrace = \xi^{2}\rho^{c}\! \sum\limits_{k^{\prime}\neq k} ^ {K}\!\! \eta_{k^{\prime}}^{c}\!\sum\limits_{m\in\mathcal{M}_{k}}^{}\!\mathbb{E}\!\left\lbrace\! \left\lvert h_{mk}^{c^*}h_{mk^{\prime}}^{c}\right\rvert^{2} \right\rbrace\\
&= \xi^{2}\rho^{c}\! \sum\limits_{k^{\prime}\neq k} ^ {K}\!\! \eta_{k^{\prime}}^{c}\!\sum\limits_{m\in\mathcal{M}_{k}}^{}\!\! \beta_{mk}^{c}\beta_{mk^{\prime}}^{c},
  \end{split}
\end{equation}
\begin{equation}\label{eqn:IDUEapp1}
\begin{split}
\mathbb{E}\left\lbrace \left\lvert \text{IDUE\textsubscript{$k$}} \right\rvert^{2} \right\rbrace &= \xi^{2}\!\rho^{d}\!\sum\limits_{l^{\prime}=1} ^ {L}\!\!\eta_{l^{\prime}}^{d}\mathbb{E}\!\left\lbrace\! \left\lvert \sum\limits_{m\in\mathcal{M}_{k}}^{}\!\!h_{mk}^{c^*}h_{ml^{\prime}}^{d}\right\rvert^{2} \right\rbrace = \xi^{2}\rho^{d}\! \sum\limits_{l^{\prime}=1} ^ {L}\!\! \eta_{l^{\prime}}^{d}\!\sum\limits_{m\in\mathcal{M}_{k}}^{}\!\mathbb{E}\!\left\lbrace\! \left\lvert h_{mk}^{c^*}h_{ml^{\prime}}^{d}\right\rvert^{2} \right\rbrace\\
&= \xi^{2}\rho^{d}\! \sum\limits_{l^{\prime}=1} ^ {L}\!\! \eta_{l^{\prime}}^{d}\!\sum\limits_{m\in\mathcal{M}_{k}}^{}\!\! \beta_{mk}^{c}\beta_{ml^{\prime}}^{d},
  \end{split}
\end{equation}
\begin{equation}\label{eqn:TNapp1}
\begin{split}
\mathbb{E}\left\lbrace \left\lvert \text{TN\textsubscript{$k$}} \right\rvert^{2} \right\rbrace &= \xi^{2}\mathbb{E}\!\left\lbrace\! \left\lvert \sum\limits_{m\in\mathcal{M}_{k}}^{}\!h_{mk}^{c^*}n_{m}\right\rvert^{2} \right\rbrace = \xi^{2}\!\sum\limits_{m\in\mathcal{M}_{k}}^{}\!\mathbb{E} \!\left\lbrace\! \left\lvert h_{mk}^{c^*}n_{m}\right\rvert^{2} \right\rbrace =  \xi^{2}N_{0}\!\sum\limits_{m\in\mathcal{M}_{k}}^{}\!\! \beta_{mk}^{c},
  \end{split}
\end{equation}
\begin{equation}\label{eqn:QNapp1}
\begin{split}
&\mathbb{E}\left\lbrace \left\lvert \text{QN\textsubscript{$k$}} \right\rvert^{2} \right\rbrace \!=\! \mathbb{E}\!\left\lbrace\! \left\lvert \sum\limits_{m\in\mathcal{M}_{k}}^{}\!h_{mk}^{c^*}q_{m}\right\rvert^{2} \!\right\rbrace =\! \!\sum\limits_{m\in\mathcal{M}_{k}}^{}\!\mathbb{E} \!\left\lbrace\! \left\lvert h_{mk}^{c}\right\rvert^{2}\mathbb{E} \!\left\lbrace\! \left\lvert q_{m}\right\rvert^{2}\Big| \{h_{mk}^{c},h_{ml}^{d}\} \!\right\rbrace\! \right\rbrace\! \overset{\eqref{eqn:Qvariance}}{=} \!\sum\limits_{m\in\mathcal{M}_{k}}^{}\!\mathbb{E} \!\Bigg\{\!\! \left\lvert h_{mk}^{c}\right\rvert^{2}(1\!-\!\xi)\xi\\
&\times\!\!\left(\!\rho^{c}\sum\limits_{k^{\prime} =1}^{K}\eta_{k^{\prime}}^{c}\left|h_{mk^{\prime}}^{c}\right|^2 \!+\! \rho^{d}\sum\limits_{l^{\prime} =1}^{L}\eta_{l^{\prime}}^{d}\left|h_{ml^{\prime}}^{d}\right|^{2} + N_{0}\right) \!\!\Bigg\} \! = \!(1-\xi)\xi\!\Bigg(\!\rho^{c}\sum\limits_{m\in\mathcal{M}_{k}}^{}\!\sum\limits_{k^{\prime} =1}^{K}\!\!\eta_{k^{\prime}}^{c}\mathbb{E} \!\left\lbrace\! \left\lvert h_{mk}^{c}\right\rvert^{2}\left|h_{mk^{\prime}}^{c}\right|^2\right\rbrace \\
&\!+\! \rho^{d}\!\sum\limits_{m\in\mathcal{M}_{k}}^{}\!\sum\limits_{l^{\prime} =1}^{L}\!\!\eta_{l^{\prime}}^{d}\mathbb{E} \!\left\lbrace\! \left\lvert h_{mk}^{c}\right\rvert^{2}\left|h_{ml^{\prime}}^{d}\right|^{2}\right\rbrace \!+\! N_{0}\!\sum\limits_{m\in\mathcal{M}_{k}}^{}\!\! \beta_{mk}^{c} \Bigg) = \!(1\!-\!\xi)\xi\rho^{c}\!\!\sum\limits_{k^{\prime} =1}^{K}\!\!\eta_{k^{\prime}}^{c}\!\!\sum\limits_{m\in\mathcal{M}_{k}}^{}\! \beta_{mk}^{c}\beta_{mk^{\prime}}^{c} \\
&+\! (1\!-\!\xi)\xi\rho^{c}\eta_{k}^{c}\!\sum\limits_{m\in\mathcal{M}_{k}}^{}\! \beta_{mk}^{c^{2}} \!+\! (1\!-\!\xi)\xi\rho^{d}\!\sum\limits_{l^{\prime} =1}^{L}\!\!\eta_{l^{\prime}}^{d}\!\sum\limits_{m\in\mathcal{M}_{k}}^{}\!\beta_{mk}^{c}\beta_{ml^{\prime}}^{d} \!+\! (1\!-\!\xi)\xi N_{0}\!\sum\limits_{m\in\mathcal{M}_{k}}^{}\!\! \beta_{mk}^{c},
\end{split}
\end{equation}
By combining equations \eqref{eqn:DSBUapp1}--\eqref{eqn:QNapp1} and using \eqref{eqn:PCSI_UatF} the achievable rate is derived.
\section*{Appendix B}\label{app:ECF_UpperBound}
According to \eqref{eqn:jensen} we approximate $R_{l}^{\text{DUE\textsubscript{P}}}$ with $\tilde{R}_{l}^{\text{DUE\textsubscript{P}}}$ which is given by
 \begin{equation}\label{eqn:PCSI_d2dapp2_1}
     \tilde{R}_{l}^{\text{DUE\textsubscript{P}}} =\! \log_{2}\!\left(\!1\! +\! \left(\mathbb{E}\left\lbrace\dfrac{\rho^{c}\sum\limits_{k =1}^{K}\!\eta_{k}^{c}\left|\bar{g}_{lk}^{d,c}\right|^{2} \!+ \rho^{d}\sum\limits_{l^{\prime} \neq l}^{L}\!\eta_{l^{\prime}}^{d}\left|\bar{g}_{ll^{\prime}}^{d}\right|^{2} \!+\! N_{0}}{\rho^{d}\eta_{l}^{d}\left\|\boldsymbol{g}_{ll}^{d}\right\|^{2}}\right\rbrace\right)^{-1}\right)\!,
 \end{equation}
 where $\bar{g}_{lk}^{d,c} = \frac{\boldsymbol{g}_{ll}^{d^{H}}\boldsymbol{g}_{lk}^{c}}{\left\|\boldsymbol{g}_{ll}^{d}\right\|}$ and $\bar{g}_{ll^{\prime}}^{d} = \frac{\boldsymbol{g}_{ll}^{d^{H}}\boldsymbol{g}_{ll^{\prime}}^{c}}{\left\|\boldsymbol{g}_{ll}^{d}\right\|}$ which are $\mathcal{CN}(0,\psi_{lk}^{c})$ and $\mathcal{CN}(0,\psi_{ll^{\prime}}^{d})$ conditioned on $\boldsymbol{g}_{ll}^{d}$ respectively, so we can write
  \begin{equation}\label{eqn:PCSI_d2dapp2_2}
  \begin{split}
     \mathbb{E}&\left\lbrace\dfrac{\rho^{c}\sum\limits_{k =1}^{K}\!\eta_{k}^{c}\left|\bar{g}_{lk}^{d,c}\right|^{2} \!+ \rho^{d}\sum\limits_{l^{\prime} \neq l}^{L}\!\eta_{l^{\prime}}^{d}\left|\bar{g}_{ll^{\prime}}^{d}\right|^{2} \!+\! N_{0}}{\rho^{d}\eta_{l}^{d}\left\|\boldsymbol{g}_{ll}^{d}\right\|^{2}}\right\rbrace = \mathbb{E}\left\lbrace\rho^{c}\sum\limits_{k =1}^{K}\!\eta_{k}^{c}\left|\bar{g}_{lk}^{d,c}\right|^{2} \!+ \rho^{d}\sum\limits_{l^{\prime} \neq l}^{L}\!\eta_{l^{\prime}}^{d}\left|\bar{g}_{ll^{\prime}}^{d}\right|^{2} \!+\! N_{0}\right\rbrace\\
     &\times \mathbb{E}\left\lbrace\dfrac{1}{\rho^{d}\eta_{l}^{d}\left\|\boldsymbol{g}_{ll}^{d}\right\|^{2}}\right\rbrace \overset{(a)}{=} \frac{\rho^{c}\sum\limits_{k =1}^{K}\!\eta_{k}^{c}\psi_{lk}^{c} \!+ \rho^{d}\sum\limits_{l^{\prime} \neq l}^{L}\!\eta_{l^{\prime}}^{d}\psi_{ll^{\prime}}^{d} \!+\! N_{0}}{\rho^{d}\eta_{l}^{d}\psi_{ll}^{d}(N-1)},
     \end{split}
 \end{equation}
 where $(a)$ comes from the fact that $\boldsymbol{g}_{ll}^{d}$ can be written as $\boldsymbol{g}_{ll}^{d} = \sqrt{\psi_{ll}^{d}}\boldsymbol{w}$ such that $\boldsymbol{w}\sim\mathcal{CN}(0,\boldsymbol{I}_{N})$, also $\|\boldsymbol{w}\|^{2}=\frac{1}{2}\sum\limits_{n=1}^{2N}x_{n}^{2}$ and $x_{n}\sim\mathcal{N}(0,1),\ \forall n$. Therefore, $\frac{1}{\|\boldsymbol{g}_{ll}^{d}\|^{2}} = \frac{1}{\psi_{ll}^{d}\|\boldsymbol{w}\|^{2}} = \frac{2}{\psi_{ll}^{d}\sum_{n=1}^{2N}x_{n}^{2}}$ where $\frac{1}{\sum_{n=1}^{2N}x_{n}^{2}}$ has inverse Chi-squared distribution with $2N$ degrees of freedom with $\mathbb{E}\left\lbrace \frac{1}{\sum_{n=1}^{2N}x_{n}^{2}} \right\rbrace = \frac{1}{2N - 2}$ which results in
 \begin{equation}
 \begin{split}
     \mathbb{E}\left\lbrace\dfrac{1}{\rho^{d}\eta_{l}^{d}\left\|\boldsymbol{g}_{ll}^{d}\right\|^{2}}\right\rbrace &= \mathbb{E}\left\lbrace\frac{2}{\rho^{d}\eta_{l}^{d}\psi_{ll}^{d}\sum_{n=1}^{2N}x_{n}^{2}}\right\rbrace = \frac{2}{\rho^{d}\eta_{l}^{d}\psi_{ll}^{d}}\mathbb{E}\left\lbrace\frac{1}{\sum_{n=1}^{2N}x_{n}^{2}}\right\rbrace\\
     &= \frac{2}{\rho^{d}\eta_{l}^{d}\psi_{ll}^{d}}\times\frac{1}{2(N - 1)} = \frac{1}{\rho^{d}\eta_{l}^{d}\psi_{ll}^{d}(N - 1)}.
     \end{split}
 \end{equation}

\bibliography{References}
\bibliographystyle{ieeetr}

\end{document}